\documentclass[runningheads]{llncs}

\usepackage[utf8]{inputenc}
\usepackage[pdftex,dvipsnames]{xcolor}

\usepackage{tabularx}

\usepackage{todonotes}

\usepackage{mathtools}
\usepackage{amssymb}
\usepackage{xspace}
\usepackage{paralist}
\usepackage{dsfont}
\usepackage[ligature,reserved,inference]{semantic}

\usepackage{tikz-cd}

\usepackage{url}

\usepackage{amsmath}
\usepackage{amsfonts}
\usepackage{amssymb}
\usepackage{mathtools}
\usepackage{proof}
\usepackage{colortbl}

\usepackage{hyperref}
\usepackage{nameref}
\usepackage{wrapfig}

\usepackage{listings}
\usepackage[ruled]{algorithm}
\usepackage[noend]{algpseudocode}

\usepackage{environ}


 \newcommand{\textbfup}[1]{\textup{\texttt{#1}}}
 \reservestyle{\mech}{\textbfup}
 \mech{first, fail} 

\newcommand{\ff}{\textit{ff}}
\newcommand{\cf}{\textit{cf}}
\newcommand{\cc}{\textit{cc}}
\newcommand{\fc}{\textit{fc}}
\newcommand{\ffc}{\textit{ffc}}
\newcommand{\cfc}{\textit{cfc}}
\newcommand{\ccc}{\textit{ccc}}
\newcommand{\fcc}{\textit{fcc}}

\newcommand{\cff}{\textit{cff}}

\newtheorem{specification}{Specification}
\newcommand{\myparagraph}[1]{\paragraph{\textup{\textbf{#1}}}}

\newcommand{\fnfhookup}{Fail/NoFail\xspace}

\mathlig{<-}{\leftarrow}
\mathlig{==}{\equiv}
\mathlig{++}{\mathop{+\!\!+}}
\mathlig{-x}{\mathbin{-\mspace{-8mu}\times}}
\mathlig{->>}{\twoheadrightarrow}

\reservestyle{\variables}{\text}
\variables{depth[height],
paths[leaves], tx}

\reservestyle{\mathfunc}{\mathsf}
\mathfunc{applytx[applyTx], basicTx}
\mathfunc{prune[decide], leafs[allFutures], nextTx, add}
\mathfunc{extend[attach],  innerprune[prune], step, allFutures, 
succ, timeout, failmap, successors, undecided, timeoutCommit, failmapCommit, failmapFail,
allMonitoringCommitWithTimeout[knownToCommitWithTimeout], commitWithTimeout,
monitoringContracts, allMonitoringCommit[knownToCommit],
oneMonitoringFail[knownToFail], allFutureAllMonitoringCommitWithTimeout}

\reservestyle{\stmt}{\textbf}
\stmt{return, assert}

\newcommand\nextTx{\<nextTx>}

\newcommand\add{\<add>}
\newcommand\suc{\<succ>}
\newcommand\leaves{\<leafs>}
\newcommand\leafs{\<leafs>}
\newcommand\allFutures{\<allFutures>}

\newcommand\applyTx{\<applytx>}
\newcommand\applyTxOld{\<basicTx>}

\newcommand\depth{\<depth>}
\newcommand\timeout{\ensuremath\mathsf{timeout}}
\newcommand\txid{\texttt{txid}}

\newcommand\fail{\mathsf{Fail}}
\newcommand\nofail{\mathsf{Commit}}

\newcommand\commitType{\ensuremath\mathbf{Commit}}
\newcommand\failType{\ensuremath\mathbf{Fail}}
\newcommand\pendingType{\ensuremath\mathbf{Pending}}

\newcommand\undefined{\ensuremath{\texttt{\textbf{?}}}}
\newcommand\None{\mathsf{None}}

\algnewcommand\algorithmicswitch{\textbf{switch}}
\algnewcommand\algorithmiccase{\textbf{case}}
\algnewcommand\algorithmicassert{\texttt{assert}}
\algnewcommand\Assert[1]{\State \algorithmicassert(#1)}%
\algdef{SE}[SWITCH]{Switch}{EndSwitch}[1]{\algorithmicswitch\ #1\ \algorithmicdo}{\algorithmicend\ \algorithmicswitch}%
\algdef{SE}[CASE]{Case}{EndCase}[1]{\algorithmiccase\ #1}{\algorithmicend\ \algorithmiccase}%
\algtext*{EndSwitch}%
\algtext*{EndCase}%


\newcommand{\repeattheorem}[1]{%
  \begingroup
  \renewcommand{\thetheorem}{\ref{#1}}%
  \expandafter\expandafter\expandafter\theorem
  \csname reptheorem@#1\endcsname
  \endtheorem
  \endgroup
}

\NewEnviron{reptheorem}[1]{%
  \global\expandafter\xdef\csname reptheorem@#1\endcsname{%
    \unexpanded\expandafter{\BODY}%
  }%
  \expandafter\theorem\BODY\unskip\label{#1}\endtheorem
}

\NewEnviron{replemma}[1]{%
  \global\expandafter\xdef\csname replemma@#1\endcsname{%
    \unexpanded\expandafter{\BODY}%
  }%
  \expandafter\lemma\BODY\unskip\label{#1}\endlemma
}

\newcommand{\repeatatlemma}[1]{%
  \begingroup
  \renewcommand{\thelemma}{\ref{#1}}%
  \expandafter\expandafter\expandafter\lemma
  \csname replemma@#1\endcsname
  \endlemma
  \endgroup
}


\usepackage{listings}
\usepackage[dvipsnames]{xcolor}

\definecolor{verylightgray}{rgb}{.97,.97,.97}

\lstdefinelanguage{Solidity}{
	keywords=[1]{anonymous, assembly, assert, balance, break, call, callcode, case, catch, class, constant, continue, constructor, contract, debugger, default, delegatecall, delete, do, else, emit, event, experimental, export, external, false, finally, for, function, gas, if, implements, import, in, indexed, instanceof, interface, internal, is, length, library, log0, log1, log2, log3, log4, memory, modifier, new, payable, pragma, private, protected, public, monitor_storage, pure, push, require, return, returns, revert, selfdestruct, send, solidity, storage, struct, suicide, super, switch, then, this, throw, transfer, true, try, typeof, using, value, view, while, with, addmod, ecrecover, keccak256, mulmod, ripemd160, sha256, sha3}, 
	keywordstyle=[1]\color{blue}\bfseries,
	keywords=[2]{set,address, bool, byte, bytes, bytes1, bytes2,
	bytes3, bytes4, bytes5, bytes6, bytes7, bytes8, bytes9,
	bytes10, bytes11, bytes12, bytes13, bytes14, bytes15, bytes16,
	bytes17, bytes18, bytes19, bytes20, bytes21, bytes22, bytes23,
	bytes24, bytes25, bytes26, bytes27, bytes28, bytes29, bytes30,
	bytes31, bytes32, enum, int, int8, int16, int24, int32, int40,
	int48, int56, int64, int72, int80, int88, int96, int104,
	int112, int120, int128, int136, int144, int152, int160,
	int168, int176, int184, int192, int200, int208, int216,
	int224, int232, int240, int248, int256, mapping, string, elem,
	uint, uint8, uint16, uint24, uint32, uint40, uint48, uint56,
	uint64, uint72, uint80, uint88, uint96, uint104, uint112,
	uint120, uint128, uint136, uint144, uint152, uint160, uint168,
	uint176, uint184, uint192, uint200, uint208, uint216, uint224,
	uint232, uint240, uint248, uint256, var, void, ether, finney,
	szabo, wei, days, hours, minutes, seconds, weeks, years, map,
        txId},	
	keywordstyle=[2]\color{teal}\bfseries,
	keywords=[3]{block, blockhash, coinbase, difficulty, gaslimit,
	number, timestamp, msg, gas, sender, sig, value, now, tx,
	gasprice, origin, add, epochinc, get, setminus, emptyset,
	system, first, queue, ustore, isSelfOnly, gen_op,
	verdict, tx_id, preCommit, abort, preAbort},	
	keywordstyle=[3]\color{violet}\bfseries,
	keywords=[4]{init, term, begin,end, upon, timeout, monitor,
	failmap, COMMIT, UNDECIDED,FAIL,fail},
	keywordstyle=[4]\color{BrickRed}\bfseries,
	identifierstyle=\color{black},
	sensitive=false,
	comment=[l]{//},
	morecomment=[s]{/*}{*/},
	commentstyle=\color{gray}\ttfamily,
	stringstyle=\color{red}\ttfamily,
	morestring=[b]',
	morestring=[b]"
}

\lstset{
	language=Solidity,
	backgroundcolor=\color{white},
	extendedchars=true,
	basicstyle=\footnotesize\ttfamily,
	showstringspaces=false,
	showspaces=false,
	numbers=left,
	numberstyle=\footnotesize,
	numbersep=9pt,
	tabsize=2,
	breaklines=true,
	showtabs=false,
	captionpos=b
}

\newcommand{\Thanks}{\thanks{This work was funded in part by PRODIGY Project
   (TED2021-132464B-I00)---funded by MCIN/AEI/10.13039/501100011033/ and
   the European Union NextGenerationEU/PRTR---by DECO Project
   (PID2022-138072OB-I00)---funded by MCIN/AEI/10.13039/501100011033 and
   by the ESF+---and by a research grant from Nomadic Labs and the Tezos
   Foundation.}
 }
\title{Monitoring the Future of Smart Contracts\Thanks}

\def\orcid#1{\smash{\href{http://orcid.org/#1}{\protect\raisebox{-1.25pt}{\protect\includegraphics{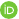}}}}}

\author{Margarita Capretto\inst{1,2}\orcid{0000-0003-2329-3769}\and
  Martin Ceresa\inst{1}\orcid{0000-0003-4691-5831} \and
  C\'esar S\'anchez\inst{1}\orcid{0000-0003-3927-4773}
}
\authorrunning{Capretto, Ceresa and Sánchez}
\institute{IMDEA Software Institute, Spain \and
  Universidad Politécnica de Madrid (UPM), Madrid, Spain}


\begin{document}

\maketitle

\begin{abstract}
  Blockchains are decentralized systems that provide trustable
  execution guarantees.
  Smart contracts are programs written in specialized programming languages
  running on blockchains that govern how tokens and cryptocurrency are sent and
  received.
  Smart contracts can invoke other smart contracts during the execution of
  transactions always initiated by external users.

  Once deployed, smart contracts cannot be modified, so techniques
  like runtime verification are very appealing for improving their
  reliability.
  However, the conventional model of computation of smart contracts is
  transactional: once operations commit, their effects are permanent
  and cannot be undone.

  In this paper, we proposed the concept of \emph{future monitors} which allows
monitors to remain waiting for future transactions to occur
before committing or aborting.
  This is inspired by optimistic rollups, which are modern blockchain
  implementations that increase efficiency (and reduce cost) by
  delaying transaction effects.
  We exploit this delay to propose a model of computation that allows
  (bounded) future monitors.
  We show our monitors correct respect of legacy
  transactions, how they implement future bounded monitors and how
  they guarantee progress.
  We illustrate the use of future bounded monitors to
  implement correctly multi-transaction flash loans.
\end{abstract}


\section{Introduction}\label{sec:introduction}
%
%
\emph{Blockchains}~\cite{nakamoto06bitcoin} were first introduced as
distributed infrastructures that eliminate the need of trustable third
parties in electronic payment systems.
Modern blockchains incorporate smart
contracts~\cite{szabo96smart,wood2014ethereum}~(contracts hereon),
which are stateful programs stored in the blockchain that govern the
functionality of blockchain transactions.
Users interact with blockchains by invoking
contracts,
whose execution controls the exchange of
cryptocurrency.
Contracts allow sophisticated functionality, enabling many
applications in decentralized finances (DeFi), decentralized
governance, Web3, etc.

%
%
Contracts are written in high-level programming languages, like
Solidity~\cite{solidity} and Ligo~\cite{alfour20ligo}, which are then
typically compiled into low-level bytecode languages like
EVM~\cite{wood2014ethereum} or Michelson~\cite{michelson}.
Even though contracts are typically small compared to
conventional software, writing contracts is notoriously
difficult.
The open nature of the invocation system---where every contract can
invoke every other contract---facilitates that malicious users break
programmer's assumptions and steal user tokens
(e.g.~\cite{Daian.2016.DAO}).
%
%
Once installed, contract code is immutable,
and the effect of
running a contract cannot be reverted (the contract \emph{is} the
law).
%

%
%
Two classic reliability approaches can be applied to contracts:
\begin{compactitem}
\item \textbf{static techniques} ranging from static
  analysis~\cite{Stephens.2021.Smartpulse} and model
  checking~\cite{Permenev.2020.VerX} to deductive software
  verification
  techniques~\cite{ahrendt20functional,nehai19deductive,bhargavan16formal,conchon19verifying},
  theorem
  proving assistants~\cite{Bruno.2019.MiChoCoq,annenkov20concert,schiffl20formal}
  or assisted formal construction of
  programs~\cite{Sergey.2018.Scilla}.
\item \textbf{dynamic
    verification}~\cite{ellul18runtime,azzopardi18monitoring,li20securing,capretto22monitoring}
  dynamically inspecting the execution of contracts against 
specifications taking corrective measures.
\end{compactitem}
%
%
We follow in this paper a dynamic monitoring technique.
Monitors are a defensive mechanism to express desired properties that
must hold during the execution of the contracts.
If the property fails, the monitor fails the whole transaction.
Otherwise, the execution finishes normally according to the contract
code.
In practice, monitors are mixed within the contract code, which limits
the properties that can be monitored.
In~\cite{capretto22monitoring}, the authors presented a hierarchy of
monitors, including operation and transaction monitors.
An operation monitor for a contract $A$ runs alongside \(A\) and reads
and modifies specific monitor variables stored in
$A$~\cite{ellul18runtime,azzopardi18monitoring,li20securing}.
Operation monitors can only execute when $A$ is invoked and cannot
inspect or invoke other contracts.
Transaction monitors~\cite{capretto22monitoring} can inspect
information across a full transaction, even after the last invocation
of $A$ in the transaction.
For example, the return of a loan \emph{within the transaction} is an
important property that can be monitored with a transaction monitor
and not by an operation monitor, because a transaction must fail if
the money lent is not returned by the end of the transaction.

Traditional blockchain systems cannot implement transaction
monitors~\cite{capretto22monitoring}, but fortunately, this is easy to
achieve by extending the execution model with two simple features: a
\<first> instruction and a \fnfhookup hookup mechanism.
Instruction \<first> returns \emph{true} during the first invocation
of the contract in the current transaction.
The \fnfhookup mechanism equips each contract with a new flag, \<fail>, that can
be assigned (to \emph{true} or \emph{false}) during the execution of the
contract (and that is \emph{false} by default).
The semantics of \<fail> is that transactions fail if at least one  contract
has its \<fail> flag set to \emph{true} at the end of the transaction.

In this paper, we study an even richer notion of monitors that enables
to fail or commit depending on \emph{future
  transactions}.
Future monitors can predicate on sequences of transactions during a
bounded period of time.
This period of time, called the \emph{monitoring window} is fixed
a priori.

\myparagraph{Optimistic rollups.}
Future monitors can be incorporated easily in Layer-2 Optimistic
Rollups\footnote{optimistic rollups for short}, which are an approach
to improve blockchain scalability by moving computation and data
off-chain.
The most popular optimistic rollup implementation is
Arbitrum~\cite{ArbitrumNitro},
implemented on top of the Ethereum blockchain~\cite{wood2014ethereum}.
%
%
%
%
Arbitrum offers the same API as Ethereum, allowing to install and
invoke Ethereum contracts.
Arbitrum transactions are executed off-chain and their effects are
submitted as \emph{assertions}.
%
%
Assertions are \textit{optimistically} assumed to be correct and a
fraud-prove arbitration scheme allows to detect invalid assertions.
Assertions are pending during a challenging period\footnote{Currently
a week.}
to allow observers to check their correctness.
The arbitration game consists of a bisection protocol, played between
the challenger and asserter, which has the property that the honest
player can always win the dispute.
Assertions that survive until the end of the challenge period become
permanent.
Future monitors can exploit the delay imposed by the challenging
period to fail or commit based on information from the future.

\myparagraph{Bounded Future Monitoring.}
In this article, we enrich transaction monitors with a controlled
ability to predicate about the future evolution of blockchains.
Contracts are extended to include: \txid, \(\<failmap>\), and
\(\<timeout>\).
The instruction \txid\ returns the (unique) current transaction
identifier.
Each contract is equipped with a map $\<failmap>$ indicating---for
each transaction involving the contract---whether the future monitor
of the transaction is activated or not, and if so, its monitoring
status (commit, fail or undecided).
By default, future monitoring is deactivated.
%
%
Contracts can modify their $\<failmap>$ (1) to activate the future
monitor of the current transaction, or (2) to commit or fail undecided
future monitors of previous transactions within the monitoring window.
If a contract sets a past transaction \(\<failmap>\) entry to fail,
the corresponding transaction fails.
%
%
The \(\<timeout>\) function is invoked at the end of the monitoring
window to decide whether to fail or commit if the future monitor of
the transaction is still undecided.
This guarantees that transactions cannot be pending after a bounded
amount of time.

We call our monitors \emph{future monitors} since the decision to
commit or fail may depend on transactions that will execute in the
future.
Future monitors expand the monitor hierarchy presented
in~\cite{capretto22monitoring}, which included operation and
transaction monitors as well as monitors that involve several
contracts (multicontract monitors) or even the whole blockchain
(global monitors), but always in the context of a single transaction.
When combined with future monitors, we obtain \emph{multicontract
  future monitors} and \emph{global future monitors}, but we leave
these extensions as future work.
A particular subclass of \emph{multicontract future monitors} was
studied in~\cite{ellul21optional} focusing on long-lived
transactions~\cite{gray81transaction}, whose lifetime span blockchain
transactions and potentially involve different contracts and parties.
Fig.~\ref{fig:hierarchy} shows the updated monitoring hierarchy
including future monitors.
\begin{figure}[h!]
  \begin{tabular}{|l|l|}\hline
    \textbf{Present} & \textbf{Future} 
    \\\hline
    Global monitors  & Global future monitors [future work] 
    \\\hline
    Multicontract monitors  & Multicontract future monitors \cite{ellul21optional} [future work]
    \\\hline
    Transaction monitors \cite{capretto22monitoring} &\cellcolor{gray!35} Future monitors [this work] 
    \\\hline 
    Operation monitors
    \cite{azzopardi18monitoring,ellul18runtime,li20securing}   \\\cline{1-1}
  \end{tabular}
  \caption{Monitor hierarchy. The first column belongs to~\cite{capretto22monitoring}.}
  \label{fig:hierarchy}
\end{figure}

The rest of the paper is organized as follows.
Section~\ref{sec:prelims} presents an abstract model of computation,
which is extended in Section~\ref{sec:modelOfComputation} to support
future monitors.
In Section~\ref{sec:properties}, we discuss their properties.
Section~\ref{sec:example} shows a real-world example of future
monitors.
In Section~\ref{sec:related-work} we discuss related work, and
Section~\ref{sec:conclusion} concludes.


\section{Model of Computation}\label{sec:prelims}

We introduce now our abstract model of computation to reason about
blockchains.

\myparagraph{Blockchains Execution Overview.}
Blockchains are incremental permanent records of executed transactions
packed in blocks.
Transactions are in turn composed of a sequence of operations where
the initial operation is an invocation from an external user.
Each operation invokes a destination contract, which is identified by
its unique address.
The execution of an operation follows the instructions of the program
(the contract) stored at the destination address.
Contracts can modify their local storage and invoke other
contracts.

Transaction execution consists of executing operations, computing
their effects (which may include the generation of new operations)
until either (1) there are no more pending operations, or (2) an
operation fails or the available gas is exhausted.
In the former case, the transaction commits and all changes are made
permanent.
In the latter case, the transaction fails and no effect takes place in
the storage of contracts, except that some gas is consumed.
Therefore, the state of contracts is determined by the effects of
committing transactions.
%

\myparagraph{Model of Computation.}
Our model computation describes blockchain state evolution as the
result of sequential transaction executions.
Blockchain configurations are records containing all information
required to compute transactions, such as: a partial map between
addresses and their storage and balance, plus additional information
about the blockchain such as block number.
We use $\Sigma$ to denote blockchain configurations and
\(\mathcal{U}\) to denote balances of external users.

Transactions are the result of executing a sequence of operations
starting from an external operation placed by a user.
Transactions can either commit, if every operation is successful, or
fail, if one of its operations fails or the gas is exhausted.
We use function $\applyTxOld$, which takes a transaction, a blockchain
configuration, and balances of external users as inputs, and returns
the blockchain configuration and the external user balances that
result from executing the transaction in the input configuration.
Additionally, predicate $\suc$ indicates whether the execution of a
transaction commits or fails in a given blockchain configuration and
external user balances.
Furthermore, function $\mathsf{discount}$ deducts the specified amount
of tokens from the balance of the indicated user in the provided
external user balances.
The following relation $\leadsto_{tx}$ defines the evolution of the
blockchain using $\applyTxOld$, $\suc$ and $\mathsf{discount}$:
\[
  \begin{tabular}{lcr}
  \inference[\texttt{commit}]
    {\applyTxOld(tx,\Sigma,\mathcal{U}) = (\Sigma', \mathcal{U}') \\ \suc(tx,\Sigma, \mathcal{U}) = \text{commit}}
    {\Sigma, \mathcal{U} \leadsto_{tx} \Sigma', \mathcal{U}'}
    & \hspace{0.5em} &
 \inference[\texttt{fail}]
    {
    \mathcal{U}' = \mathsf{discount}(\mathcal{U}, \mathsf{src}(tx),\mathsf{cost}(tx))
    \\ \suc(tx,\Sigma,\mathcal{U}) = \text{fail}
    }
      {\Sigma, \mathcal{U} \leadsto_{tx} \Sigma, \mathcal{U}'}
  \end{tabular}
\]

If a transaction fails~(rule~\texttt{fail}), the blockchain configuration is
preserved, but the external user originating the transaction pays for
the resources consumed.
Cost and resource analysis are out of the scope of this paper, so we
ignore the computation of \(\mathcal{U}\).

Operation and transaction monitors are defined at the operation and
transaction level, and thus, they are implemented inside $\applyTxOld$
and abstracted away in this model.


\section{Bounded Future Monitored Blockchains}\label{sec:modelOfComputation}
In this section, we present a modified model of computation supporting
future monitors.
The main addition is the implementation of monitoring transactions
predicating on future transactions within a monitoring window $k$.
The monitoring window captures for how long (in the number of
transactions) the monitor can predicate on.
This additional feature enables us to install a monitor per transaction.
Future instances of contracts that activated a future monitor can
decide to either fail or commit the past transaction within the
monitoring window.
If any contract sets to fail the transaction future monitor of a past
transaction, the monitored transaction fails.
Otherwise, when all contracts that monitor a given transaction commit
the transaction becomes permanently committed.

\subsection{Future \(k\)-bounded Monitors}\label{sec:futurekmonitors}
Transactions can commit or fail depending on their subsequent
\(k\) transactions, and thus, the post-state after executing a
transaction may depend on future transactions.
At any given point in time, transaction future monitors may:
\begin{compactitem}
\item fail because at least one contract involved set the monitor to
  fail;
\item commit because all contracts involved set the monitor to commit;
\item stay pending.
\end{compactitem}
Therefore, we identify three transaction monitor states: known to fail, (denoted
by \(\fail\)), known to commit (denoted by \(\nofail\)) and undecided
(denoted by \(\undefined\)).
Finally, we add another value to represent transactions without monitors:
$\None$.

\paragraph*{Failing Map.}
A contract $C$ can only interact with the future monitor of
transaction $t$ if $C$ was involved in $t$.
To keep track of different monitors for $C$ (for different
transactions), every contract $C$ has a map, called \emph{failing
  map}, from transactions to monitor states.

At the start of a transaction, the monitor is deactivated and can only
be activated during the current transaction.
Therefore, if at the end of a transaction $t$ no contract updated the
failing map of its monitor for $t$, then the behavior is like legacy
unmonitored transactions (as previously described in
Section~\ref{sec:prelims}).

A contract $C$ can modify its failing map many times but only the entries
of
\begin{wrapfigure}[7]{r}{0.37\linewidth}
  \vspace{-1.3em}
  \centering
  \includegraphics[scale=0.37]{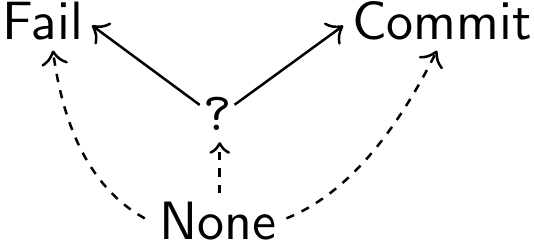}
  \caption{Monitor transitions.}
  \label{fig:states}
\end{wrapfigure}
those transactions where $C$ was involved and activated the monitor.
Changes to failing maps at the end of transactions can be (1) the
activation of the monitor for the current transaction (from $\None$ to
$\fail, \nofail$, or $\undefined$, indicated by dashed arrows in
Fig.~\ref{fig:states}); or (2) decisions reached for undecided
monitors (from $\undefined$ to $\fail$ or $\nofail$, indicated by
plain arrows).

\paragraph*{Timeout.}
Contracts have a new special function called $\timeout$ that can be
used to describe the decision of undecided monitors at the monitoring
window.
Function $\timeout$ takes a transaction identifier and returns either
$\fail$ or $\nofail$ and it is set by contracts.
The default $\timeout$ function returns $\nofail$.

At the end of the monitor window, the system invokes $\timeout$
if the failing map entry for that transaction is marked as
$\undefined$.
If at least one contract involved in the transaction decides to fail,
the transaction fails, and otherwise the transaction commits.

\subsection{Extending the Model of Computation}~\label{sec:extmodelcomp}
%
%
We extend blockchain configurations with a \emph{future monitor context} $\Delta$
associating contracts with their failing map and $\timeout$ function.

\paragraph{Transaction Execution:}
Transactions can immediately commit or fail, or depend on future
transactions that happen within the monitoring window, so the
execution of a transaction can return one of the following cases:
\begin{compactitem}
\item a new configuration as an immediate commit,
\item a new configuration as an immediate fail,
\item two \emph{possible} new configurations, one for failing and one
  for committing, which depends on the future.
\end{compactitem}
These behaviors are captured by a new function $\applyTx$ that checks
if future monitors were activated during the transaction.
Future monitors restrict the behavior of the blockchain, because they
only modify the blockchain evolution making transactions fail more
often.

Non-monitored transactions either immediately commit or fail based on
function $\suc$, and their effects are equivalent to the traditional
model.

The function $\applyTx$, when applied to a monitored not failing
transaction, returns two blockchain configurations, describing the
only two possible futures.
The first configuration represents the effects if the transaction
commits, and the second represents a failing transaction, so in these
cases the post-configurations are identical to the previous
configurations (modulo resources consumed).

A contract $C$ can only modify its failing map to activate the future
monitor of the current transaction or to decide future monitors that
$C$ had previously activated but not yet decided.
If a contract incorrectly updates its failing map, the current
transaction fails.
When transactions fail, the system does not modify any failmap map or
timeout function.

\paragraph{Blockchain System.}
There are two types of transactions: \emph{permanent} (committed or
failed) and \emph{pending} transactions.
%
%
%
\emph{Blockchain runs} are pairs $(H,\tau)$ consisting of a sequence
$H$ of consolidated blockchain configurations called the
\emph{history} and a directed tree $\tau$ where each internal node has
one or two children.
$H$ contains only permanent transaction.
Tree $\tau$ is called the \emph{monitoring tree} and includes pending
transactions.
Each node in the monitoring tree is a blockchain configuration.
The monitoring tree represents all possible sequences of blockchain
states that the list of pending transactions can generate.
Exactly one path in the tree will eventually survive and become part
of $H$, which depends on whether the corresponding transactions commit
of fail.
Each level in the tree corresponds to the execution of transactions up
to that level but different configuration at the same level is a
different possible reality.
To simplify notation, we use $n$ to refer to the blockchain
configuration captured by node $n$ in the tree.
The root of the monitoring tree is the last blockchain configuration
that was consolidated, that is, the last blockchain configuration in
the history sequence.

The height of the monitoring tree is at most \(k\).
It can be shorter than \(k\) at the genesis of the blockchain but once
the first \(k\) transactions have been executed the monitoring tree
reaches and maintains a height \(k\).
In the worst case, depending on the contracts deployed in the
blockchain, the monitoring tree can have \(2^{k+1}-1\) nodes, but in
general not every transaction is going to be monitored which reduces
the branching and hence the size of the tree.

Fig.~\ref{fig:blockchain_run} shows a blockchain run $(H,\tau)$.
The first $j+1$ transactions are permanent and the
last $k$ transactions are pending.
The last permanent blockchain configuration is $(\Sigma,\Delta)$
and it is also the root of the monitoring tree \(\tau\).
When the first pending transaction, $t_{j+1}$, executes from
configuration $(\Sigma,\Delta)$, a contract $C$ that executed in
\(t_{j+1}\) activated the transaction future monitor generating a
branching in $\tau$.
However, not all transactions generate a branching in the monitoring
tree as not all transactions are necessarily monitored, (for example
$t_{j+k}$).
Configuration $(\Sigma',\Delta')$ is a one of the possible outcomes of
executing all pending operations.

\begin{figure*}[t!]
    \centering
    \includegraphics[scale=0.4]{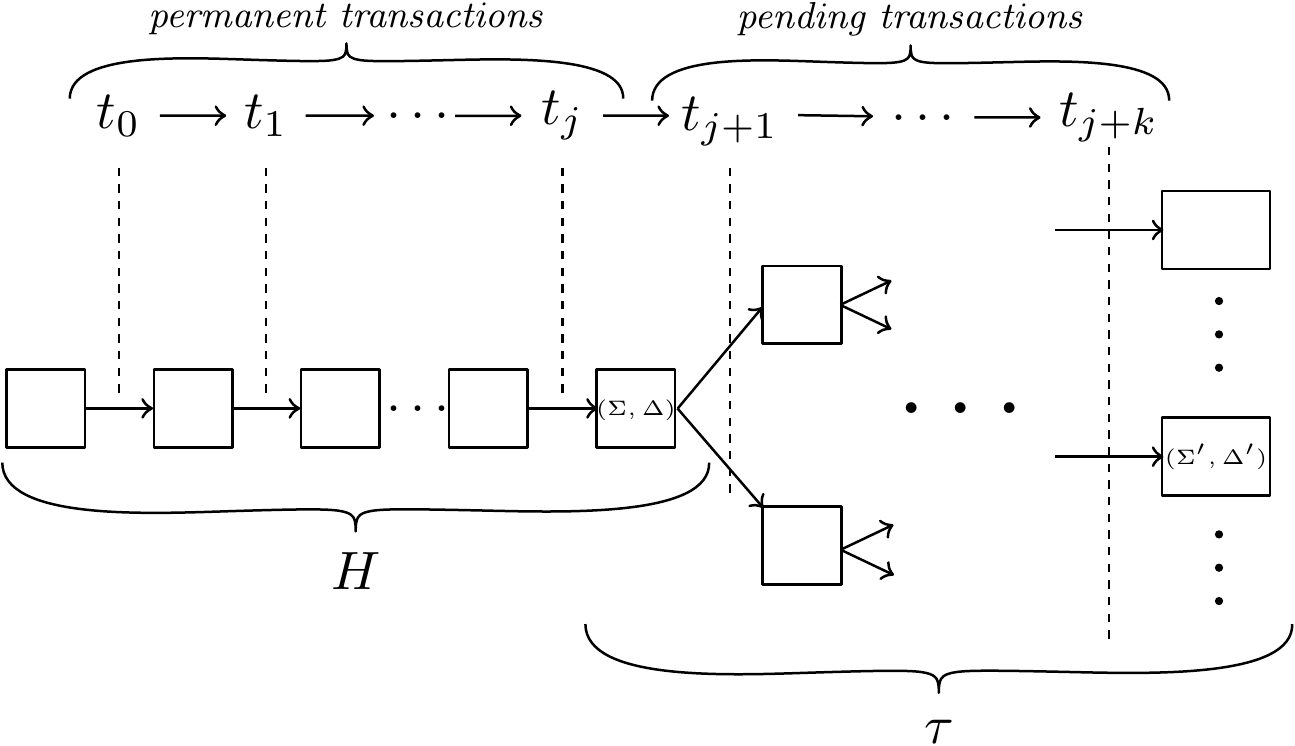}
    \caption{A blockchain run of $j+1$ permanent
      transactions and $k$ pending transactions.}
    \vspace{-2em}
    \label{fig:blockchain_run}
\end{figure*}

\paragraph*{Notation.}
We use the following functions:
\begin{compactitem}
\item \(\nextTx(n)\): returns the transaction that labels the outgoing
  edges from $n$.
\item The \emph{successor} of a node \(n \xrightarrow{t} n'\) in the
  monitoring tree.
\item \(\<successors>(n)\): given a node $n$ that is not a leaf,  returns all
successors of $n$, which can be $(n_c,n_f)$, where $n_c$ is the committing
successor and $n_f$ the failing successor, or $n'$ if $n$ is not
branching.
\item the \emph{committing subtree} of $n$: the maximal subtree rooted at
  the committing successor of $n$.
\item the \emph{failing subtree} of $n$: the maximal subtree rooted at
  the failing successor of $n$.
\item $\allFutures(n)$: the set of leaves reachable from node $n$.

\end{compactitem}
Consider \(n \xrightarrow{t} n'\).
The configuration at \(n'\) is one of the possible results of
executing transaction \(t\) from the blockchain configuration at
\(n\).
For simplicity, when referring to a monitoring tree \(\tau\) with the root
node \(n\), we use the terms \(\tau\) and \(n\) interchangeably. 
Thus, \(\<successors>(\tau)\) denotes the successors of the root node
of \(\tau\).
The possible futures of the root node of monitoring tree $\tau$,
denoted by $\allFutures(\tau)$, is referred as the futures in
$\tau$.

\begin{example}
  The following figure shows an example run after $7$ transactions, starting
  at initial blockchain configuration $N_0$ and monitoring
  window $k=2$.

  \begin{center}
    \includegraphics[scale=0.4]{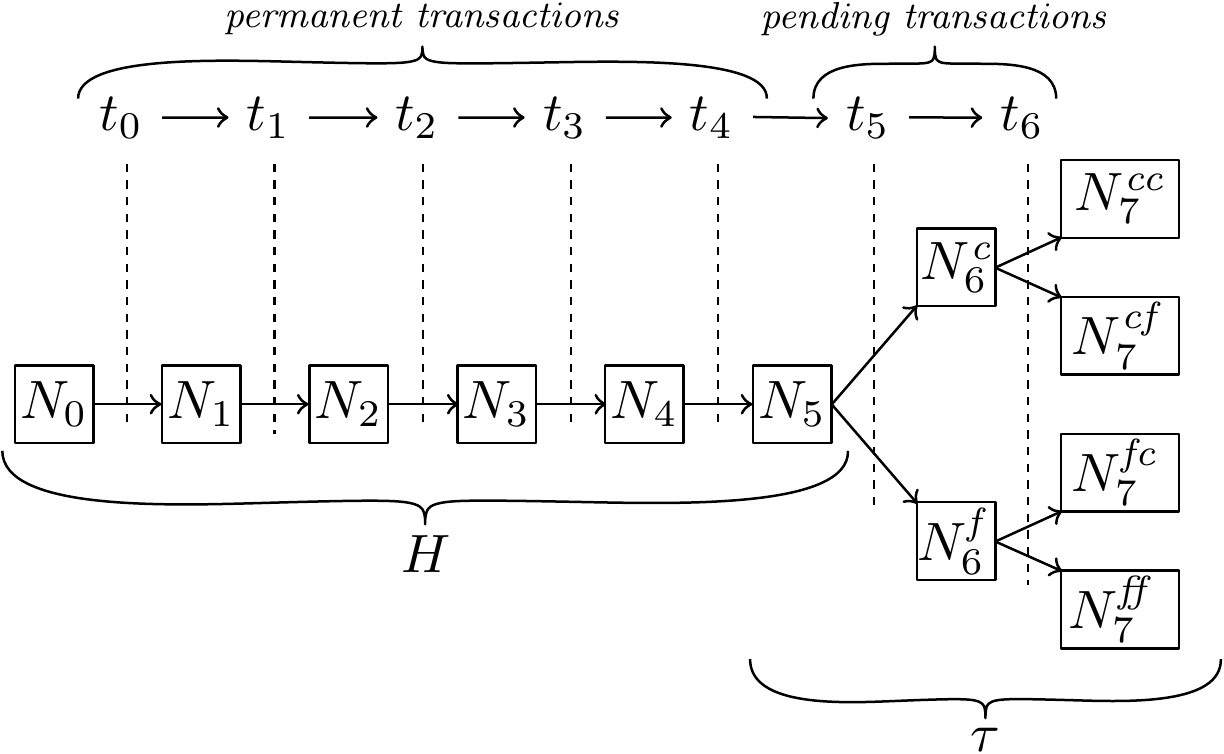}
  \end{center}

  \noindent History \(H\) corresponds to the first $5$ permanent
  transactions.
  The remaining transactions are pending forming a directed tree
  \(\tau\) whose root is $N_5$.
  The transaction at node \(N_5\) is \(\<nextTx>(N_5) = t_5\).
  Node \(N_5\) successors are
  \(\<successors>(N_5) = (N^{c}_6,N^{f}_6)\).
  The committing subtree of \(N_5\) is the subtree with root
  \(N^{c}_6\) and the failing subtree of \(N_5\) is the subtree with
  root \(N^{f}_6\).
  Finally, the futures in \(\tau\) are
  $\allFutures(\tau) =
  \{N^{\cc}_7,N^{\cf}_7,N^{\fc}_7,N^{\ff}_7\}$.
  We annotate with superscript \(c\) and \(f\) the committing and
  failing transactions, respectively, and group them in sequences
  describing paths in monitoring trees. 
\end{example}


\subsection{Blockchain evolution}~\label{sec:blockevo}
The evolution of the blockchain is defined by function \(\<step>\)
(see Fig.~\ref{fig:alg:stepextend}) which takes blockchain runs and
transactions, and extends runs.
The system has only one rule:
\[
  \begin{tabular}{c}
  \inference
      {\<step>((H,\tau), t) = (H',\tau')}
    {(H,\tau) ->>_{t} (H',\tau')}
  \end{tabular}
\]
Valid traces are defined by the relation $->>$ and consist of chains
of related blockchain states
$(H_0,\tau_0) ->>_{t_0} (H_1,\tau_1) ->>_{t_1} \ldots$ where
$(H_0,\tau_0)$ is an initial blockchain run with
$\tau_0 = H_0 = (\Sigma,\Delta)$.

\newcommand{\StepBody}{
    \renewcommand{\thealgorithm}{Step}         
   \begin{algorithmic}
     \Function{$\<step>$}{$(H,\tau), t$}
       \State $\tau' \leftarrow \<extend>(\tau,t)$
       \If{ \(\depth(\tau') \leq k\)} 
         \State $\<return> (H,\tau')$
       \Else
         \State $\tau'' \leftarrow \<prune>(\tau')$
         \State $\<tx> \leftarrow \nextTx(\tau)$
         \State $H.add( \tau \xrightarrow{\<tx>} \tau'')$
         \State $\<return> (H, \tau'')$
       \EndIf  
      \EndFunction
    \end{algorithmic}
}


\newcommand{\ExtendBody}{
    \renewcommand{\thealgorithm}{Attach}         
   \small
   \begin{algorithmic}
     \Function{$\<extend>$}{$\tau, t$} \Comment{\small Extends
       monitoring trees.}
       \State $\tau' \leftarrow \tau$
       \For{$l \in \leaves(\tau)$}
         \Switch{$\applyTx(t,l)$}
           \Case{$\commitType(l_c): \, \tau'.\add(l \xrightarrow{t} l_c)$ } 
           \EndCase
           \Case{$\failType(l_f): \, \tau'.\add(l \xrightarrow{t} l_f)$} 
           \EndCase
           \Case{$\pendingType(l_c,l_f){: \, \tau'.\add(l \xrightarrow{t} (l_c,l_f))}$}
           \EndCase  
         \EndSwitch 
       \EndFor
       \State \<return> $\tau'$  
     \EndFunction
   \end{algorithmic}
}


\newcommand{\PruneBody}{
    \renewcommand{\thealgorithm}{Prune}         
   \small
   \begin{algorithmic}
     \Function{$\<prune>$}{$\tau$} \Comment{Decides commit/fail of
       the root transaction of $\tau$}
       \State \<assert> $\depth(\tau) = k+1$          
       \State $\tau' \leftarrow \<innerprune>(\tau)$ 
       \State \(t \leftarrow \nextTx(\tau)\)
       \Switch{\(\<successors>(\tau')\)}
          \Case{\(\tau''\): \<return> \(\tau''\)}
          \EndCase
          \Case{\((\tau_c, \tau_f)\):}
            \If{\(\forall l \in \leafs(\tau_c): \<allMonitoringCommitWithTimeout>(l,t)\)}
              \(\!\!\,\)
              \<return> \(\tau_c\)
            \Else
              \(\!\,\)
              \<return> \(\tau_f\)
            \EndIf
          \EndCase
        \EndSwitch  
     \EndFunction
   \end{algorithmic}
}
\newcommand{\FigPrune}{
  \begin{figure}[h!]
    \PruneBody
\caption{\small Function decide}
 \label{fig:alg:prune}
\end{figure}
}


\newcommand{\InnerPruneBody}{
   \small
   \begin{algorithmic}
     \Function{$\<innerprune>$}{$\tau$} 
       \If{\(\tau\) is a leaf} 
         \<return> \(\tau\)
       \EndIf
       \State \(t \leftarrow \nextTx(\tau)\)
       \Switch{\(\<successors>(\tau)\)}
          \Case{\(\tau'\): \<return> \(\tau \xrightarrow{t} \<innerprune>(\tau')\)}
          \EndCase
           \Case{\((\tau_c, \tau_f)\):}
             \State \(\tau'_c \leftarrow \<innerprune>(\tau_c)\)
             \State \(\tau'_f \leftarrow \<innerprune>(\tau_f)\)
             \If{\(\forall l \in \leafs(\tau'_c): \<allMonitoringCommit>(l,t)\)}
               \<return> \(\tau \xrightarrow{t}\tau'_c\)
            \EndIf
            \If{\(\forall l \in \leafs(\tau'_c): \<oneMonitoringFail>(l,t)\)}
              \<return> \(\tau \xrightarrow{t}\tau'_f\)
             \EndIf
             \State \<return> \(\tau \xrightarrow{t} (\tau'_c,\tau'_f)\) 
           \EndCase
        \EndSwitch  
     \EndFunction
    \end{algorithmic}
    \vspace{-1em}
  }
  
\newcommand{\FigInnerPrune}{
  \begin{figure}[h!]
    \InnerPruneBody
    \caption{Function innerprune: removes impossible inner nodes.}
     \label{fig:alg:innerprune}
\end{figure}
}


\newcommand{\AuxBody}{
   
   \begin{algorithmic}
     \Function{$\<failmapCommit>$}{$\Delta,c,t$} \<return> \( \Delta[c].\<failmap>[t] = \nofail \) 
     \EndFunction    
     \Function{$\<failmapFail>$}{$\Delta,c,t$}
        \<return> \( \Delta[c].\<failmap>[t] = \fail \) 
     \EndFunction   
     \Function{$\<timeoutCommit>$}{$\Delta,c,t$}
        \<return> \( \Delta[c].\<timeout>[t] = \nofail \) 
     \EndFunction
     \Function{$\<undecided>$}{$\Delta,c,t$}
          \<return> \( \Delta[c].\<failmap>[t] = \undefined \) 
     \EndFunction   
     \Function{ $ \<monitoringContracts> $ }{$ l,t $}
          \<return> \( \{ c : l.\Delta[c].\<failmap>[t] \neq \None \} \)
     \EndFunction   
     \Function{ $ \<allMonitoringCommit> $ }{$ l,t $}
        \State \<return> \( \forall c \in \<monitoringContracts>(l,t): \<failmapCommit>(l.\Delta,c,t) \)
     \EndFunction
     \Function{ $ \<oneMonitoringFail> $ }{$ l,t $}
        \State \<return> \( \exists c \in \<monitoringContracts>(l,t): \<failmapFail>(l.\Delta,c,t) \)
     \EndFunction 
     \Function{$\<commitWithTimeout>$}{$\Delta,c,t$}
        \State \<return> \( \<failmapCommit>(\Delta,c,t) \vee
        (\<undecided>(\Delta,c,t) \wedge \<timeoutCommit>(\Delta,c,t)
        \)
     \EndFunction   
     \Function{ $ \<allMonitoringCommitWithTimeout> $ }{$ l,t $}
        \State \<return> \( \forall c \in \<monitoringContracts>(l,t): \<commitWithTimeout>(l.\Delta,c,t) \)
     \EndFunction
    \end{algorithmic}
    \vspace{-1em}
  }
\newcommand{\FigAux}{
  \begin{figure}[h!] 
    \AuxBody
    \caption{Auxiliary functions}
    \label{fig:alg:aux}
  \end{figure}
}


Let \((H,\tau)\) be a blockchain run and \(t\) a transaction.
We extend the monitoring tree \(\tau\) by adding a new level attaching
\(t\) from every possible leaf, which increases by one the height of
\(\tau\) (see Fig.~\ref{fig:alg:stepextend}).
Let \(\tau'\) be the result of \(\<extend>(\tau,t)\).
If \(\tau'\) has height \(k+1\), the monitoring window for the first
transaction in $\tau'$ has expired and its monitor must fail or
commit.
To take this decision, function \(\<step>\) invokes function $\<prune>$.
The resulting monitoring tree $\tau''$ returned by function $\<prune>$
becomes the new monitoring tree.
Finally, function \(\<step>\) extends $H$ making the first pending
transaction permanent.

\begin{figure}[t!]
  \begin{tabular}{|@{\!\!}l@{\!\!}|@{\!\!}l|}\hline
    \begin{minipage}{0.36\textwidth}
      \vspace{0.9em}
      \StepBody
      \vspace{0.2em}
    \end{minipage}
    &
      \begin{minipage}{0.66\textwidth}
        \ExtendBody
      \end{minipage}\\ \hline
  \end{tabular}
  \caption{Functions \(\protect\<step>\) and \(\protect\<extend>\).}
  \label{fig:alg:stepextend}
\end{figure} 

Function $\<prune>$ (see Fig.~\ref{fig:alg:funcs}) determines whether
to commit or fail the first pending transaction \(tx\) in monitoring
tree $\tau$ with height \(k+1\) returning either the committing
or failing subtree of $\tau$.
If \(\tau\) has only one successor, the decision is trivial, otherwise
we analyze \(tx\) possible futures.
Function \(\<prune>\) checks all futures assuming \(tx\) commits,
(i.e., all leaves in the committing subtree of \(\tau\)); if the
future monitor of transaction \(tx\) commits in all of them, then
\(tx\) commits and the committing subtree of \(\tau\) becomes the new
monitoring tree.
Otherwise, \(tx\) fails and the failing subtree of \(\tau\) becomes
the new monitoring tree.
If \(\<prune>\) cannot assert whether the monitored transaction fails
or commits, \(\<prune>\) invokes \(\timeout\) to decide (see
function~\(\<allMonitoringCommitWithTimeout>\) in
Fig.~\ref{fig:alg:funcs}).



\begin{figure}[t!]
  \begin{tabular}{@{}|@{}l@{}|@{}}\hline
    \begin{minipage}{\textwidth}
      \vspace{0.5em}
      \PruneBody
      \vspace{1em}
    \end{minipage}  \\[0.2em] \hline
    \begin{minipage}{\textwidth}
      \vspace{0.5em}
      \InnerPruneBody
      \vspace{1.5em}
    \end{minipage}    \\[0.2em] \hline
    \begin{minipage}{0.95\textwidth}
      \vspace{0.5em}
      \AuxBody
      \vspace{1.5em}
    \end{minipage} \\[0.2em] \hline
  \end{tabular}
  \caption{Functions $\mathsf{decide}$, $\mathsf{prune}$ and auxiliary functions.}
  \vspace{-1em}
  \label{fig:alg:funcs}
\end{figure}


In some cases, the decision of future monitors is known before the
monitoring windows ends.
In such instances, some nodes are unreachable, called
\emph{impossible nodes}.
For example, when a transaction future monitor is waiting for a
transaction in the future and that transaction happens before the
monitoring window ends, the future monitor is going to be set to
commit, which turns all nodes in its failing subtree impossible nodes.
Concretely, if in all possible futures in the committing subtree of
node $n$ its transaction is known to commit, then all nodes in the
failing subtree of $n$ are impossible nodes.
Similarly, if in all possible futures in the committing subtree of
node $n$ its transaction is known to fail, then all nodes in the
committing subtree of $n$ are impossible.
Impossible nodes are removed before deciding whether a transaction
commits or not, since we may incorrectly deduce that a monitor fails
because of an impossible future node.
Consequently, $\<prune>$ invokes $\<innerprune>$ to remove all
impossible nodes, and only then, \(\<prune>\) determines whether the
root transaction commits or not as explained above.
Function $\<innerprune>$ (see Fig.~\ref{fig:alg:funcs}) shows how
to prune impossible nodes from trees.
To guarantee that impossible nodes are pruned before checking if roots
of trees are impossible (either commit or fail), we perform a
bottom-up recursion.


\newcommand{\stepfig}[2]{\includegraphics[width=#2\textwidth]{figures/step-#1.pdf}}

\begin{figure}[t!]
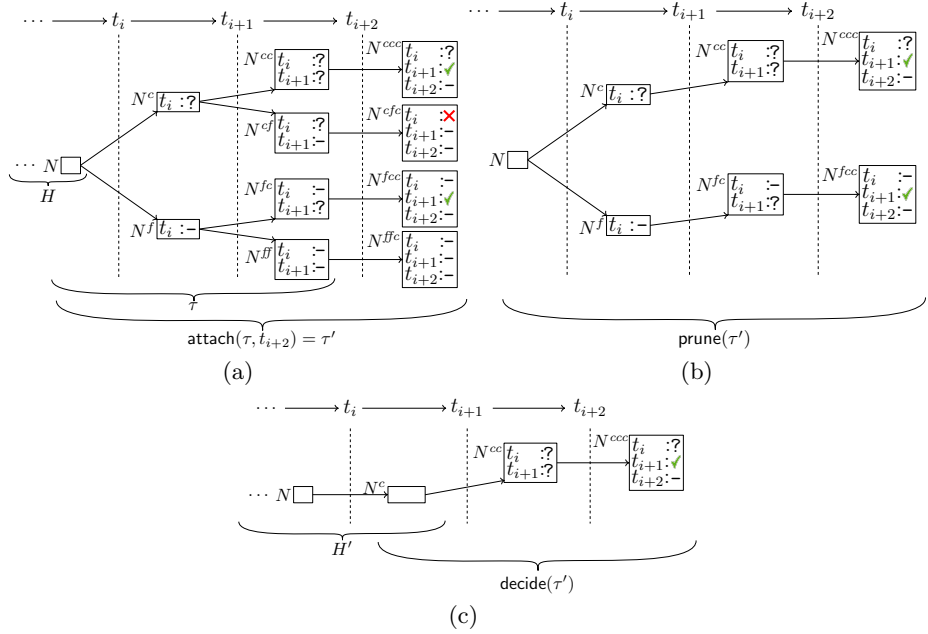

  \footnotesize\centering
  \noindent
  \begin{tabular}{@{}c@{}c@{}}
  \stepfig{a}{0.50} & \stepfig{b}{0.50} \\
      (a)  & (b) \\
  \end{tabular}
  \stepfig{c}{0.5} \\
  (c)
  \caption{Application of function \(\protect\<step>\) in a blockchain run.}
  \label{fig:step}
\end{figure}

\begin{example}
  Fig.~\ref{fig:step} shows the result of applying function
  \(\<step>\) to blockchain run \((H,\tau)\) with a monitoring window
  \(k=2\) and two pending transactions \(t_i\) and \(t_{i+1}\).
Each node in the monitoring tree is annotated with the monitor state
of all pending transactions up to that node: a question mark means
undecided monitors, a tick means known to commit monitors, a cross
means known to fail monitors, and a dash denotes no monitored
transactions.
Initially, no monitors are decided in any node in \(\tau\).

Function \(\<step>((H,\tau),t_{i+2})\) first invokes function
\(\<extend>(\tau, t_{i+2})\).
This function adds a new level to \(\tau\) by applying transaction
\(t_{i+2}\) at all leaves in \(\tau\), obtaining monitoring tree
\(\tau'\), Fig.~\ref{fig:step}(a).
Transaction \(t_{i+2}\) immediately commits at all leaves in \(\tau\),
generating nodes \(N^\ccc, N^\cfc,N^\fcc\) and \(N^\ffc\).
%
%
The future monitor for transaction \(t_{i}\) is known
to fail at node \(N^\cfc\) while remaining undecided at node
\(N^\ccc\) and the future monitor for transaction \(t_{i+1}\) is
known to commit at nodes \(N^\ccc\) and \(N^\fcc\).
Next, as the height of the new monitoring tree, \(\tau'\), is \(3>2\),
function \(\<step>\) invokes function \(\<prune>(\tau')\) to decide if
the first pending transaction, \(t_{i}\), fails or commits.
Function \(\<prune>\) invokes function \(\<innerprune>\) to
remove all impossible nodes in \(\tau'\).
%
%
When computing \(\<innerprune>\), the failing subtree of node \(N^c\),
rooted at node \(N^\cf\), is removed because at node \(N^\ccc\) the
future monitor for the transaction at node \(N^c\), \(t_{i+1}\), is
known to commit and node \(N^\ccc\) is the only future in the
committing subtree of node \(N^c\), making the subtree rooted at
\(N^\cf\) an impossible subtree.
Similarly, the subtree rooted at \(N^\ff\) is an impossible subtree
and it is also removed by function \(\<innerprune>\).

Subtrees with roots \(N^\cf\) and \(N^\ff\) are the only ones removed
when applying function \(\<innerprune>\) to monitoring tree \(\tau'\),
as shown in Fig.~\ref{fig:step}(b).

Finally, to decide whether to commit or not transaction \(t_i\)
function \(\<prune>\) consider node \(N^\ccc\), as it is the only
future in the committing subtree of node \(N\) in the monitoring tree
returned by function \(\<innerprune>\).
At node \(N^\ccc\) the future monitor for transaction \(t_{i}\) is
undecided.
However, since its monitoring window has ended, function
\(\<prune>\) uses the \(\<timeout>\) of the contracts that are
undecided.
Assuming for all undecided contracts their \(\<timeout>\) function commit
transaction \(t_i\), then function \(\<prune>\) commits transaction \(t_i\),
returning the subtree rooted at \(N^c\) as the new monitoring tree (see
Fig.~\ref{fig:step}(c)), it would fail if at least one contract timeout function
fails.
Finally, function \(\<step>\) extends \(H\) by making transaction
\(t_i\) permanent.
If \(\<innerprune>\) had not been applied before function \(\<prune>\)
evaluated all futures in the committing subtree of \(N\), transaction
\(t_{i}\) would have incorrectly failed, as in impossible future
\(N^{cfc}\), the future monitor for transaction \(t_{i}\) fails.
\end{example}

Appendix~\ref{app:TwoBounded} shows an example of contracts that only
lend their tokens if they receive them back within $2$ transactions in
the future.


\section{Properties}\label{sec:properties}

We discuss now properties of the model of computation defined in
Section~\ref{sec:modelOfComputation}.
In particular, we establish how the new model extends the previous
one, that the size of monitoring trees is manageable,
and the blockchain always progresses.
We assume a fixed monitoring window $k$.
All proofs can be found in Appendix~\ref{sec:proofs}.


After the monitoring window has expired, the root transaction is
confirmed and one of two possible successors is consolidated.
\begin{replemma}{monitoring-tree}\label{l:monitoring-tree}
  Let \((H,\tau)\) be the system run after \(k\) transactions, \(t\) a
  transaction and \((H',\tau') = \<step>((H,\tau),t)\).
%
The root of $\tau'$ is one of the successors of the root of $\tau$ and
all paths in $\tau'$ without leaves are also paths in $\tau$.
Moreover, $H'$ is obtained by extending $H$ with the first pending
transaction on $\tau$.
\end{replemma}

The first $k$ transactions from the genesis are just added to the
tree.
From the previous lemma, after \(k\) transactions and when
a new step is taken, the first pending transaction is either committed
or failed and a new pending transaction is attached to all leaves.
Moreover, the transaction added to the history is the root of the
previous monitoring tree and one of its successors is the root of the
new monitoring tree.
In other words, exactly one of the paths in the monitoring tree
eventually becomes permanent, and thus, the blockchain always
progresses.
\begin{corollary}[Progress]~\label{coro:progress}
  Function $\<step>$ is total and, after the first \(k\) invocations,
  each execution of $\<step>$ makes one transaction permanent.
\end{corollary}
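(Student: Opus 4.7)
The plan is to split the corollary into its two claims and address each by combining Lemma~\ref{l:monitoring-tree} with a simple structural invariant about the height of the monitoring tree.

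For totality of $\<step>$, I would argue by inspection that each of the subroutines it invokes terminates on any finite input. The loop in $\<extend>$ ranges over the finite set $\leafs(\tau)$; the recursion in $\<innerprune>$ is structural on a finite tree with strictly smaller subtrees at every recursive call; and $\<prune>$ performs one call to $\<innerprune>$ followed by finite-cost checks over the leaves of the committing subtree. Totality of $\<step>$ then follows by composition, since the remaining work (the height test, the call to $\<extend>$, and the extension of $H$) is bounded and unconditional.

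For the progress claim, I would first establish as an invariant that after exactly $k$ invocations from the initial run the monitoring tree has height $k$. The base case is $\tau_0 = (\Sigma,\Delta)$ of height $0$; during each of the first $k$ invocations, $\<extend>$ raises the height by one while the guard $\depth(\tau') \leq k$ bypasses $\<prune>$ and leaves $H$ unchanged. Once height $k$ is reached, every subsequent invocation yields, through $\<extend>$, a tree of height $k+1$, which triggers the else-branch of $\<step>$. By Lemma~\ref{l:monitoring-tree}, the subtree returned by $\<prune>$ is rooted at a successor of $\tau$ (so has height $k$ again, preserving the invariant) and $H$ is extended with the first pending transaction $\<nextTx>(\tau)$. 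This is exactly one new permanent transaction per invocation.

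The main subtlety I anticipate is the height-preservation step used implicitly above: one must check that $\<innerprune>$ does not shorten the tree below the root, so that the subtree selected by $\<prune>$ from an input of height $k+1$ genuinely has height $k$. Concretely, $\<innerprune>$ replaces a branching node by one of its already-pruned children only when the monitor decision on that level is forced, and the two candidate children are recursively pruned from subtrees of equal depth, so at least one full-length path always survives. Once this invariant is confirmed, both parts of the corollary follow directly from Lemma~\ref{l:monitoring-tree} with no further calculation.
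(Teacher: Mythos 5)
Your proof is correct and follows essentially the same route as the paper: the corollary is derived from Lemma~\ref{l:monitoring-tree} together with the height invariant, which is exactly the paper's Lemma~\ref{l:bounded-certainty} (your inline induction reproduces its proof), and the height-preservation subtlety you flag for $\<innerprune>$ is handled in the paper by Lemma~\ref{lmm:prune}. The explicit totality argument is a small addition the paper leaves implicit, but it matches the intended reasoning.
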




The height of the monitoring tree is bounded by the monitoring window.

\begin{replemma}{bounded-certainty}[Bounded Certainty]\label{l:bounded-certainty}
  Let \(\tau\) be a monitoring tree in a blockchain run obtained by
  applying function step \(l\) times.
  Then, the height of \(\tau\) is the minimum between \(l\) and \(k\).
  Moreover, all leaves in \(\tau\) are in its last level.
  \end{replemma}
Function $\<innerprune>$ removes all impossible nodes from monitoring trees.
%
Function $\<innerprune>$ recursively removes impossible nodes in the committing
and failing subtrees, and then, determines if it can remove any subtree by
inspecting all possible futures in the committing successor.
\begin{replemma}{innerprune-subtree}\label{l:innerprune-subtree}
  Function $\<innerprune>(\tau)$ returns a sub-monitoring tree of $\tau$ without
  impossible nodes and only impossible nodes were removed.
  %
  \end{replemma}

Function \(\<step>\) consistently makes the blockchain progress.
%
After more than $k$ transactions were added, the first pending transaction
is made permanent (see Corollary~\ref{coro:progress}).
The resulting monitoring tree keeps the order of the rest of the pending
transactions and it also preserves the same information of the pending
transactions except the last.

\begin{replemma}{prune}\label{l:prune}
  Let \(\tau\) be a monitoring tree, $\eta$ be the result of expanding
  \(\tau\) with a new transaction, \(t\) be the first pending
  transaction in \(\tau\), and \(\nu\) be the decided subtree of
  \(\eta\).
  \begin{compactitem}
  \item If \(\eta\) has only one successor then \(\nu\) is the
    result of pruning \(\eta\)'s successor.
  \item If \(\eta\) has two successors, then let \(\eta_{c}\) and
    \(\eta_{f}\) be the result of pruning the committing and failing
    subtrees of \(\eta\) respectively.
    \begin{compactitem}
    \item Monitoring tree \(\nu\) is \(\eta_{c}\) if in all possible futures
    assuming \(t\) commits, transaction \(t\) does not fail or if no
      decision has been reached, all pending $\<timeout>$ functions of $t$
      commit.
    \item Monitoring tree $\nu$ is $\eta_f$ if there is a possible future where
      assuming transactions \(t\) commits, leads to the monitor of $t$
      fail or some of the pending $\<timeout>$ function of $t$ fail.
  \end{compactitem}    
\end{compactitem}
\end{replemma}

The size of monitoring trees can be exponential in the number of
monitored transaction rather than in the monitoring window size, as
monitored transactions are the only ones branching monitoring trees.
\begin{replemma}{size}\label{l:size}
  Let \(\tau\) be a monitoring tree and \(m\) be the number of
  monitored transactions in \(\tau\) (so \(m \leq k\)).
Then, the size of \(\tau\) is in \(\mathcal{O}(2^m \times k)\).
\end{replemma}

In practical scenarios, the number of monitored transactions typically
is small compared to the monitoring window because most transactions
do not require future monitors.
This makes the size of the monitoring tree much smaller than the
theoretical maximum.

\begin{corollary}\label{coro:size}
  If the number of monitored transactions in monitoring trees is
  constant then the size of monitoring trees is bounded by
  \(\mathcal{O}(k)\).
\end{corollary}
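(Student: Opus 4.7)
The plan is to derive this directly from Lemma~\ref{l:size} by constant folding. The lemma gives a size bound of $\mathcal{O}(2^m \times k)$ for any monitoring tree with $m$ monitored transactions, where $m \leq k$. Since the corollary's hypothesis fixes $m$ to be a constant (independent of $k$), the factor $2^m$ becomes a constant that may be absorbed into the big-O, leaving a linear bound in the monitoring window size $k$.

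Concretely, I would proceed as follows. First, invoke Lemma~\ref{l:size} on an arbitrary monitoring tree $\tau$ produced by the model of computation, obtaining the size bound $|\tau| \in \mathcal{O}(2^m \times k)$, where $m$ denotes the number of monitored transactions in $\tau$. Second, use the assumption that $m$ is bounded by some constant $M$ independent of $k$ to conclude that $2^m \leq 2^M$, which is itself a constant. Third, fold this constant into the asymptotic notation: $\mathcal{O}(2^m \times k) \subseteq \mathcal{O}(2^M \times k) = \mathcal{O}(k)$. This yields the claimed bound.

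There is no substantive obstacle here, since all of the combinatorial content lives in Lemma~\ref{l:size}; the corollary is a specialization. The only point worth flagging is the interpretation of the hypothesis: \emph{constant} must be read as independent of $k$ (rather than independent of the particular tree), so that the constant $2^M$ does not grow with the monitoring window. Under the natural reading of the hypothesis this is immediate, and so the corollary follows at once.
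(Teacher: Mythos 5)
Your proof is correct and matches the paper's intent exactly: the corollary is stated as an immediate consequence of Lemma~\ref{l:size}, obtained by absorbing the constant factor $2^m$ into the $\mathcal{O}(k)$ bound, which is precisely what you do. Your remark that ``constant'' must mean independent of $k$ is a fair clarification of the hypothesis, but no further argument is needed.
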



Finally, we show that adding future bounded monitors preserves legacy
executions, so for blockchain runs where no contracts use future
monitors, the monitoring tree is a chain with no branching.

A legacy monitoring tree \(\tau\) is such that every configuration
obtained from applying $\applyTx$ coincides with rule \(\leadsto\).

\begin{replemma}{legacy-tx}[Legacy Pending Transactions]~\label{l:legacy-tx}
  Let \(\tau\) be a legacy monitoring tree.
  Then, $\tau$ is a chain and the effect of executing all transactions
  in \(\tau\) is equivalent to executing them in the traditional model
  of computation.
\end{replemma}

If we add that the permanent history is equivalent (up to now)
to the traditional model, then the evolution of the blockchain in both
models coincide.

\begin{replemma}{legacy}[Legacy History]~\label{lemm:legacy}
  Let \(\tau\) be a legacy monitoring tree and \(H\) be a history
  such that every permanent transaction coincides with rule \(\leadsto\).
  Then, the result of concatenating \(H\) and \(\tau\) is equivalent
  to the traditional model of computation.
\end{replemma}

From Corollary~\ref{coro:progress} and Lemma~\ref{lemm:legacy}, we
conclude that the new model of computation is consistent with the
previous model of computation and eventually creates a chain.
Additionally, Corollary~\ref{coro:size} implies that in practical
scenarios, the size of monitoring trees is linear on the monitoring
window, making it a feasible and practical blockchain implementation.


\section{Atomic Loans}\label{sec:example}

Flash loan contracts allow other contracts to borrow tokens
\emph{without any collateral} only if the borrowed tokens are repaid during
the same transaction~\cite{EIP.FlashLoan} (typically with some
interest).
Atomic loans are a generalization of flash loans where the borrowing
party can repay the lending party in future transactions.
It is not possible to implement flash loans unless additional mechanisms are
added to the blockchain~\cite{capretto22monitoring}.
Similarly, it is impossible to implement atomic loans in traditional
blockchain computational models.
As transaction monitors~\cite{capretto22monitoring} enable flash loans
transactions, future monitors allow monitors to check properties
across transactions enabling atomic loans.
We illustrate now how to implement atomic loans using the monitoring
window as the maximum payback time.

\newcommand{\FLSafety}{\textbf{AL-safety}\xspace}
\newcommand{\FLProgress}{\textbf{AL-progress}\xspace}
We specify lender contracts as contracts respecting the following two
properties:
\begin{specification}[Atomic Loans]\label{spec:atomic-loans}
  We say contract \(\mathsf{A}\) is an atomic lender if:
\begin{compactdesc}
\item[AL-safety:] A loan from \(\mathsf{A}\) is repaid to \(\mathsf{A}\) within the monitoring window.
\item[AL-progress:] Contract $\mathsf{A}$ grants loans unless
\textup{\FLSafety} is violated.
  \end{compactdesc}
  \end{specification}

The following contract \lstinline|FlashLoanLender| shows a simple contract
implementing a flash loan lender\footnote{Flash loan lender are atomic loan
lenders with paying back window of one.} using \fnfhookup
hookup~\cite{capretto22monitoring}, i.e. with no future monitors but transaction
monitors.
%
%
We highlight monitor code with gray background.

\vspace{1em}
\noindent\begin{tabularx}{\textwidth}{|@{}X@{}|}\hline
  
  \includegraphics[scale=0.53]{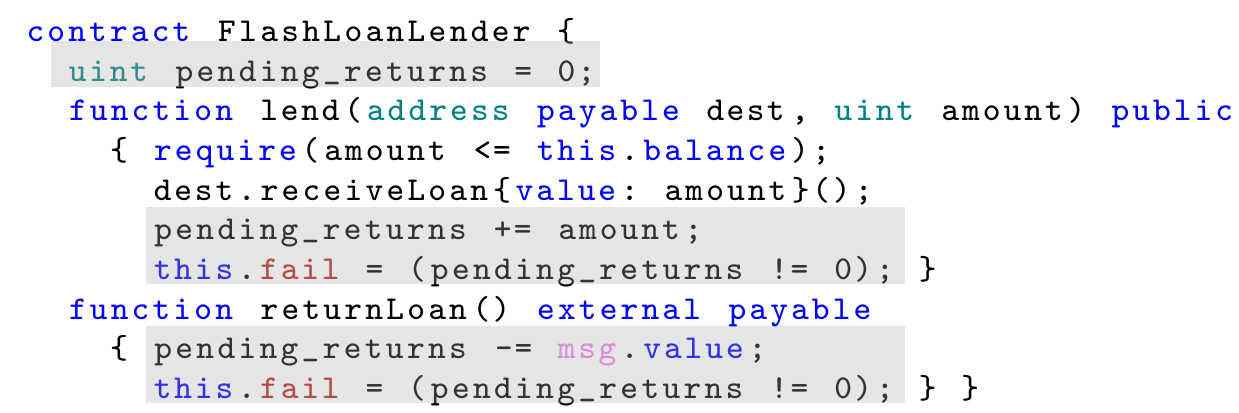} \\ \hline
\end{tabularx}
\vspace{1em}

Function \lstinline|lend| lends as long as the lender has
enough funds, annotates the borrowed tokens in
\lstinline|pending_returns| and sets its \lstinline|fail| bit so the
transaction commits only if the loan is paid back.
When the loan is returned, \lstinline|returnLoan| decreases
\lstinline|pending_returns| and updates its \lstinline|fail| bit.
At the end of each transaction, if there are pending loans the
\lstinline|fail| bit will make the transaction fail.

The above contract implements flash loans that must be returned within a
transaction, but does not work properly if future transactions are considered.
It is not possible to successfully predict or check whether the loan
is returned in some future transactions.
We show now how future monitors solve this problem.

The following contract \lstinline|Lender| is an atomic lender
using future monitors.
%
All loans are treated equally and should be paid back on time, and if
one loan is not returned, then all loans issued at the same transaction
would be rejected.
Here we are being too strict compared to practical cases, but it is
enough to illustrate the use of future transaction monitors.

\vspace{1em}
\noindent\begin{tabularx}{\textwidth}{@{}|X@{}|}\hline
  \includegraphics[scale=0.53]{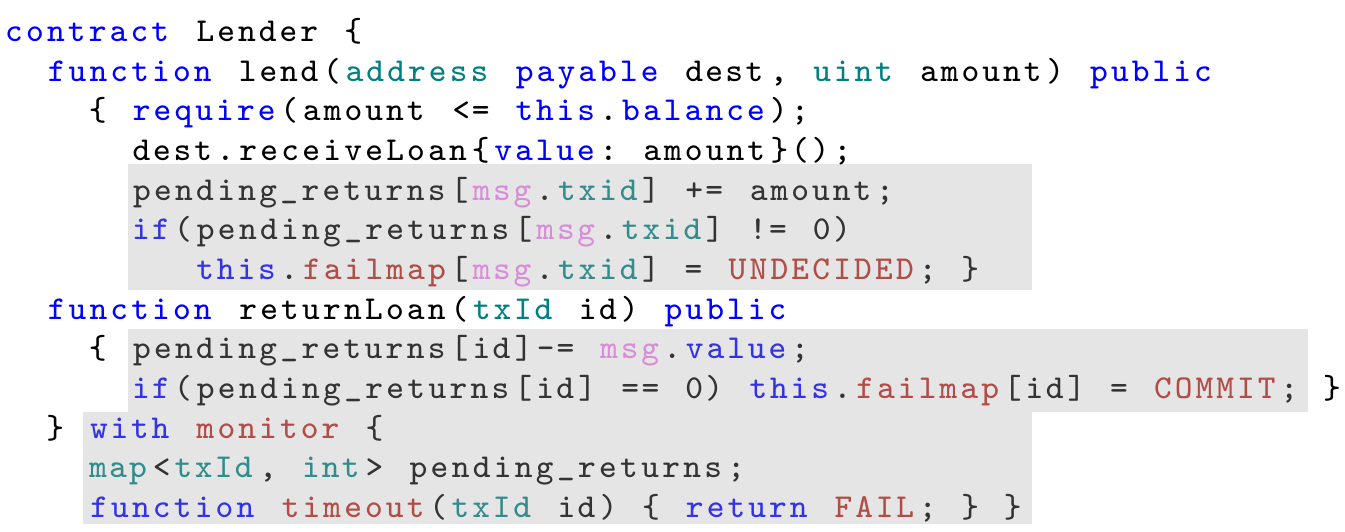} \\ \hline
\end{tabularx}
\vspace{1em}

\noindent Contract \lstinline|Lender| uses a map
\lstinline|pending_returns|, from transactions to the amount borrowed
within that transaction, to determine whether a transaction should
commit or fail.
Function \lstinline|lend| grants a loan if the lender has enough
funds, increases the corresponding entry in map
\lstinline|pending_returns| for the current transaction and sets the
\lstinline|failmap| entry activating the current transaction monitor.
Client contracts can repay loans by invoking \lstinline|returnLoan|,
which receives the transaction identifier of the lending transaction
to decrease the corresponding entry in \lstinline|pending_returns| by
the amount received.
If \lstinline|pending_returns| reaches 0 for a given transaction, the
\lstinline|failmap| entry of that transaction is set to
\lstinline|COMMIT|.
Finally, \lstinline|timeout| returns \lstinline|FAIL| to fail
transactions with unpaid loans at the end of their monitoring window.

Clients can request loans without further collateral, satisfying
\FLProgress, and if loans are not returned within the monitoring
window, the lending transaction will retroactively fail, satisfying
\FLSafety.

The following contract \lstinline|NaiveClient| requests a loan
invoking \lstinline|borrow|.

\vspace{1em}
\noindent\begin{tabularx}{\textwidth}{|X|}\hline
  \includegraphics[scale=0.53]{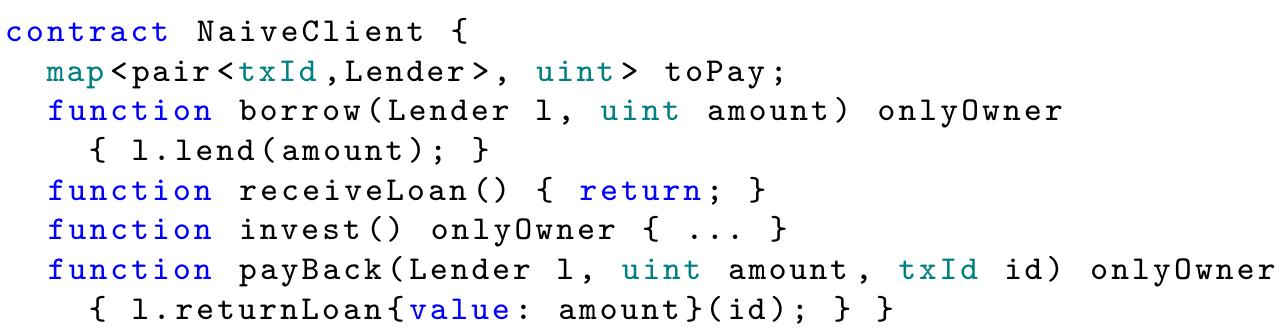} \\ \hline
\end{tabularx}
\vspace{1em}

In subsequent transactions, the client can invest the funds, and in a
final transaction, return the loan to the lender invoking
\lstinline|payBack|.

\newcommand{\NC}{\texttt{NC}\xspace}

Let \NC and \texttt{L} be two contracts installed in a
blockchain with a monitoring window of length \(2\), where \NC
runs \lstinline|NaiveClient| and \texttt{L} runs \lstinline|Lender|.
Consider \((\Sigma,\Delta)\) to be the current state of the blockchain at which
\texttt{NC} has 100 tokens and \texttt{L} has 1000 tokens.
From \((\Sigma,\Delta)\), the sequence of transactions is:
\begin{inparaenum}[(1)]
\item \NC requests a loan,
\item \NC invests assuming contract \texttt{L} lends the money, and
\item \NC returns the loan.
\end{inparaenum}
Because \texttt{L} employs future monitors to guarantee clients pay
back, the first transaction generates a branching on the blockchain
evolution.
The next two transactions are not monitored, thus they do not create
any branching.
Therefore, after these three transactions, there exist two possible
futures as shown in Fig.\ref{fig:lender_client_alwayspays:simp}, one
where \texttt{L} grants the loan and another where it does not.
\begin{figure}[t!]
  \centering
  \includegraphics[scale=0.4]{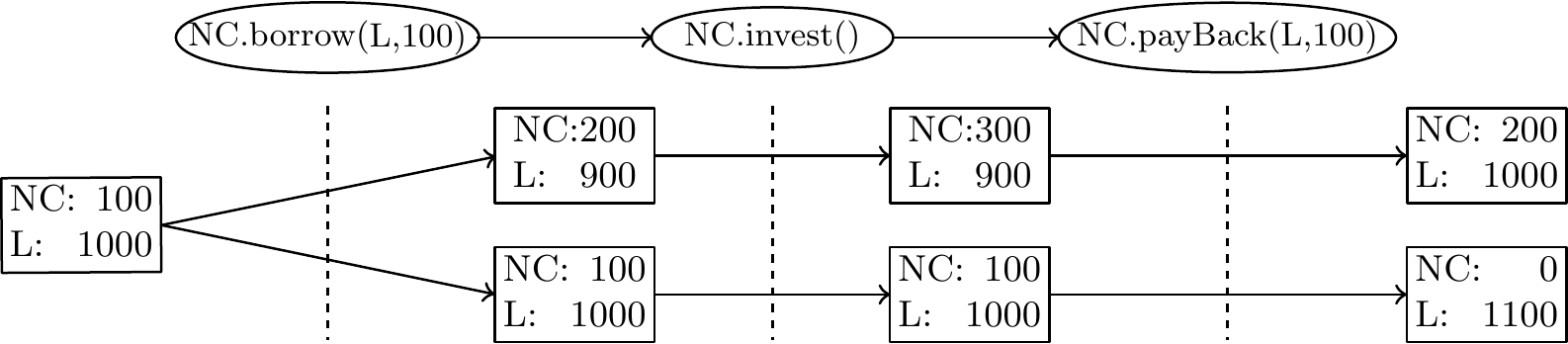}
  \caption{Balance of contracts \texttt{NC} and \texttt{L} in the
    monitoring tree after executing the three transactions posted by a client.}
  \label{fig:lender_client_alwayspays:simp}
\end{figure}
We can see that \NC pays back in all possible futures.
Moreover, contract \NC pays back even in the future where
contract \texttt{L} fails the past lending operation
(for a detailed
 explanation see Appendix~\ref{app:Example}).

A malicious lender can take advantage of such behavior, for example
using the following contract \lstinline|MaliciousLender|.

\vspace{1em}
\noindent\begin{tabularx}{\textwidth}{|X|}\hline
  \includegraphics[scale=0.53]{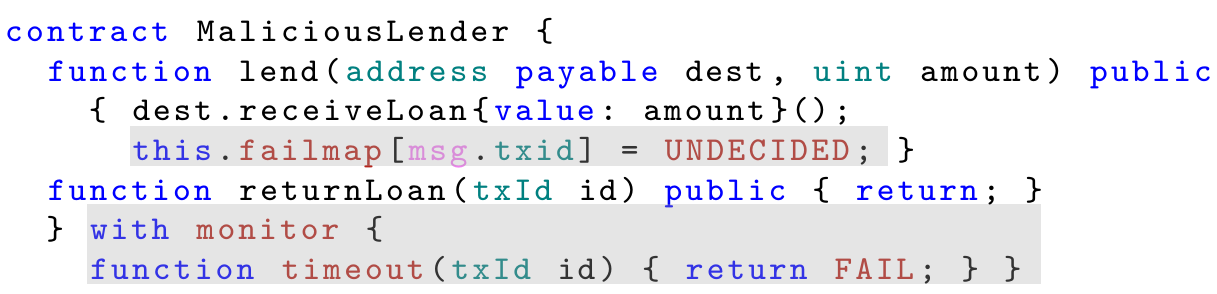} \\ \hline
\end{tabularx}
\vspace{1em}

The above malicious lender, upon receiving a loan request in
function \lstinline|lend|, if it has enough tokens, it grants the loan
and marks the transaction as undecided using its \lstinline|failmap| map.
However, this lender contract does not update its \lstinline|failmap| map
when receiving paybacks.
Therefore, at the end of the monitoring window, the monitor remains
undecided making the lending transaction fail due to the
\lstinline|timeout| function.
In other words, the malicious lender never lends any tokens,
as all its loans are reverted, but it looks like it does.
When combined with \lstinline|NaiveClient| and the same three
transactions described earlier, the malicious lender will receive the
repayment of a loan from client \NC without having given the
loan.
In Fig.~\ref{fig:lender_client_alwayspays:simp}, the bottom branch is the one
that survives when the lender implements a malicious contract.

The problem arises because client \NC does not implement any
mechanism to check in which branch it is executing when repaying the
loan.
The naive contract does not distinguish between the scenario where the
loan will ultimately be committed and the scenario where it will fail.
As a result, client \NC ends up providing payments in both
cases.

The following contract \lstinline|Client| presents a correct client
implementing a map, \lstinline|toPay|, to keep track of its debts to
lenders.

\vspace{1em}
\noindent\begin{tabularx}{\textwidth}{|X|}\hline
  \includegraphics[scale=0.53]{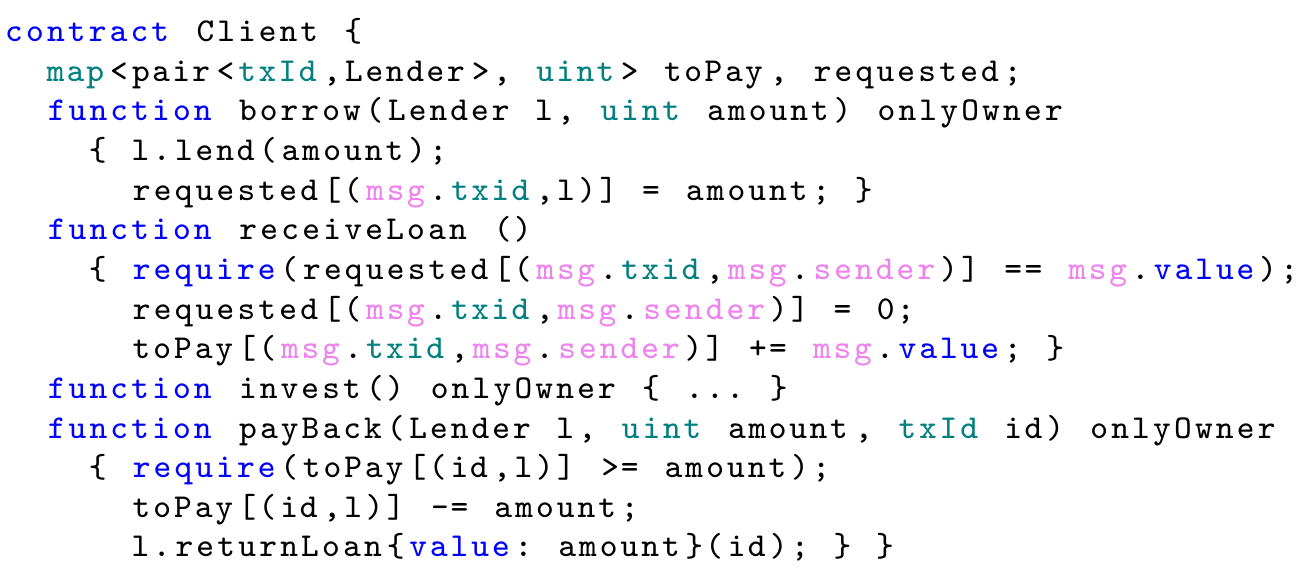} \\ \hline
\end{tabularx}
\vspace{1em}

The above contract allows clients to determine the specific path in
which it is executing, and thus, to decide whether to repay.
Consequently, clients can successfully get loans from correct
lenders while being resistant to attacks from malicious lenders.

Fig.~\ref{fig:lender_client} shows an execution following the same transactions
as before but with the correct contract \lstinline|Client|: clients request
a loan, invest the money, and payback the loan.
The top branch shows the case where the lender sends the money and the
client returns it, while the bottom branch shows the case where the
loan is not given.
In the former cases, the client returns the money, and in
the latter case, the client just fails the transaction.
\begin{figure}[t!]
     \centering
     \includegraphics[scale=0.4]{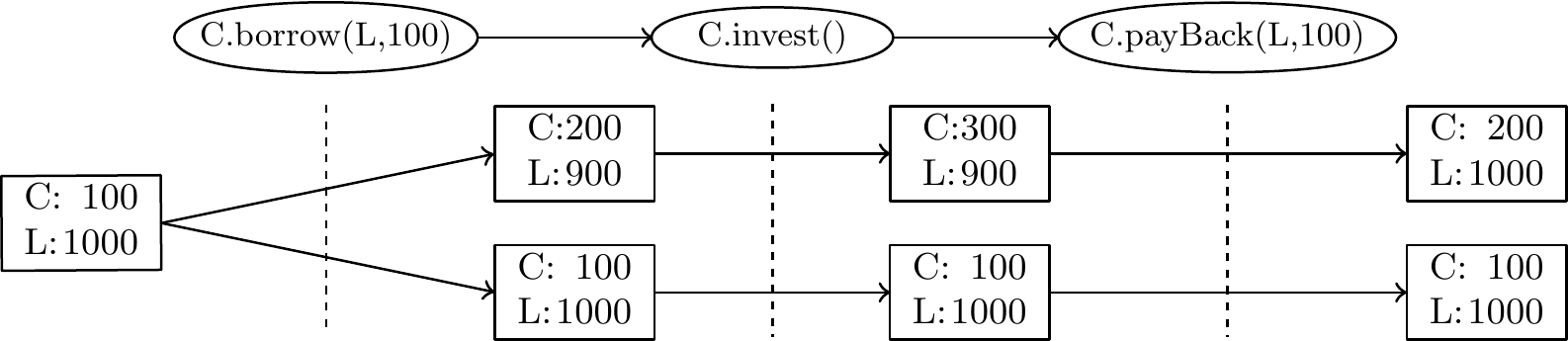}
     \caption{Balance of contracts \texttt{C} and \texttt{L} in the
monitoring tree after executing the three transactions posted by a client.}
    \label{fig:lender_client}
\end{figure} 

These examples show how even contracts not monitoring transactions
need to be aware that transactions can create potential executions in
the blockchain evolution that may be reverted due to future monitors.
Since the same transaction is executed in all possible scenarios, but
their effects may be different, contracts need to know in which
temporal line they are executing and act accordingly.
Contract \lstinline|Client| accomplishes this by maintaining a record
of debts owed to lenders in variable \lstinline|toPay|.


\section{Related Work}\label{sec:related-work}

\myparagraph{Dynamic verification of smart contracts}
Runtime monitoring tools like
ContractLarva~\cite{ellul18runtime,azzopardi18monitoring} and
Solythesis~\cite{li20securing} take a smart contract code and its
properties as input and produce a safe smart contract that fail
transactions violating the given properties.
They achieve this be injecting the monitor into the smart contract as
additional instructions.
Therefore, these monitors are restricted to one operation in a single
contract.
Transaction Monitors~\cite{capretto22monitoring} extend monitoring
beyond a single operation to observe the effect of an entire
transaction execution on a given contract.

While these existing works provide strong foundations for smart
contract verification, none directly address the ability to react
based on future transactions, as proposed in this work.

       
\myparagraph{Branching Computational Models} The monitoring tree
generated by pending transactions might reassemble the tree-like
structure in branching-time logic such as CTL~\cite{clarke1982design}.
However, it is worth noting that the branching in the monitoring tree
represent all possible futures given by the monitors of the pending
transaction, and exactly one path eventually consolidates.
More important, future monitors are not aware of the existence of the
other paths in the monitoring tree and therefore cannot reason about
them.
CTL, on the other hand, can be used to express properties that reason
about the different paths in the tree.


\section{Conclusion}\label{sec:conclusion}
We presented future monitors for smart contracts.
Future monitors are a defense mechanism enabling contracts to state
properties across multiple transactions.
These kinds of properties are motivated by long-lived transactions, in
particular by atomic loans, which are not implementable in their full
generality in current blockchains.
To implement future monitors, we introduced the notion of monitoring window and
two additional new mechanisms to blockchains, namely failing maps and timeout
functions.

Future monitors delay the consolidation of transactions, but the
system remains consistent and we gain in expressivity.
The outcome of transactions remains deterministic and depends solely
on the transactions themselves, but now transactions can fail because
of future actions.
Combining all elements we obtained a deterministic semantics with
future monitors in place.

We have also illustrated that contracts need to be aware of the
existence of possible executions.
Future monitors introduce a branching model to describe the evolution
of blockchain systems where transactions may commit or not, caused by
the temporary uncertainty regarding the effect of pending
transactions.
Consequently, when new transactions are added to the blockchain, they
are executed in multiple blockchain configurations, representing
possible time-lines.
Therefore, contracts need to be aware of the different contexts in
which they are executing, ensuring that the transaction produces the
desired effects in all possible realities.

The main contribution of this paper is theoretical and we left the
full implementation of future monitors as future work.
Optimistic rollup systems, where the effect of transactions is already
delayed due to the fraud-prove arbitration scheme, present an ideal
environment to incorporate future monitors into practical blockchain
systems without further implications.
In particular, optimistic rollup systems can allow future transaction
monitors with little modifications, and more importantly, without
modifying the underlying blockchain.

For simplicity, we have neglected a specific analysis of the
additional gas consumption that arises for using future monitors,
which might lead to the failure of accepting transactions.
Nevertheless, we conjecture that future monitors are simple enough to
guarantee that a calculable amount of gas will prevent gas failing
situations.
However, we leave a detailed study for future work.



 \bibliographystyle{abbrv}
 \bibliography{bibfile}

 \vfill\pagebreak
 
\appendix
\section{Two-bounded Monitor Example}
\label{app:TwoBounded}

%
In this section we show an example of the evolution of the system
proposed in Section~\ref{sec:blockevo}.
The example consists of three contracts exchanging two tokens between
them in a blockchain accepting a monitoring window of length \(2\).

Let \(a,b,c\) be three contracts installed in the blockchain such that
both \(a\) and \(b\) have one token each.
Contracts $a$ and $b$ use future monitors and only send their tokens
if they will receive their (respective) token back within 2
transactions.
For simplicity, in this example, we neglect gas consumption.

Let \((\Sigma_{0},\Delta_0)\) be a blockchain configuration such that
contracts \(a,b,c\) are installed and \(a\) and \(b\) have a token
each while \(c\) has none.
In Fig.~\ref{fig:traces}, we show the evolution of the system
beginning from configuration \((\Sigma_{0},\Delta_0)\) and executing the
following three transactions in order:
\begin{inparaenum}[(1)]
\item transaction \(t_{a,c}\), smart contract \(a\) sends its token to
\(c\);
\item transaction \(t_{b,c}\),
smart contract \(b\) sends its token to \(c\);
\item transaction \(t_{c, \langle a, b \rangle}\), smart contract \(c\)
sends tokens to \(a\) and \(b\).
\end{inparaenum}
The transaction \(t_{c,\langle a, b \rangle}\) involves several
internal operations where \(c\) sends a token to \(a\) and \(b\).
In Fig.~\ref{fig:traces}(a), we show the initial blockchain run
$(H_0,\tau_0)$ where both \(H_{0}\) and \(\tau_{0}\) have a node
\(N_0\) with the initial configuration \((\Sigma_{0},\Delta_{0})\).
Let \((H_{1}, \tau_{1})\) be the resulting run after contract \(a\)
sends its token to contract \(c\), i.e. executing transaction
\(t_{a,c}\)~(See
Fig.~\ref{fig:traces}(b-\(\<step>((H_0,\tau_0),t_{a,c})\))).
Blockchain run \((H_{1}, \tau_{1})\) is the result of executing
transaction \(t_{a,c}\) to history \(H_{0}\) and monitoring
tree \(\tau_{0}\), in symbols,
\(\<step>((H_{0},\tau_{0}),t_{a,c}) = (H_{1}, \tau_{1})\).
Internally, function \(\<step>\) firsts extends $\tau_0$ by executing
transaction $t_{a,b}$ in its only leaf, $N_0$.
Since $a$ only sends its token if it can get it back, the result of
executing $t_{a,c}$ in $N_0$ depends on future transactions creating a
branching in the monitoring tree.
In its committing successor, $N^c_1$, contract \(a\) token is sent to
$c$, as the transaction is assumed to commit in this scenario.
In the failing successor, $N^f_1$, contract $a$ keeps its token.
Since the monitoring tree has height $1 < 2+1$, the monitor window for
the first pending transaction has not expired, and the history remains
unchanged, $H_1 = H_0$.

Similar to the first transaction, now we execute the second
transaction where contract \(b\) sends its token to contract \(c\)
resulting in another run $(H_2,\tau_2)$~(See
Fig.\ref{fig:traces}(c-\(\<step>((H_1,\tau_1),t_{b,c})\))).
First, it extends $\tau_1$ by executing transaction $t_{b,c}$ in its
leaves, $N^c_1$ and $N^f_1$.
In both configurations, contract $b$ has a token and only sends it if
it can get it back within the monitoring window, and thus, we branch
again in both cases generating four possible configurations:
\begin{itemize}
\item At configuration $N^{\cc}_2$, the committing successor of
  $N^{c}_1$, contract \(c\) has both tokens, as transactions $t_{a,c}$
  and $t_{b,c}$ are assumed to commit in this scenario.
\item At configuration $N^{\cf}_2$, the failing successor of
  $N^{c}_1$, contract \(b\) has its token and contract \(c\) has
  contract \(a\) token, as transaction $t_{a,c}$ is assumed to commit
  while transaction $t_{b,c}$ is assumed to fail.
\item At configuration $N^{\fc}_2$, the committing successor of
  $N^{f}_1$, is the dual case of the previous one.
\item At configuration $N^{\ff}_2$, the failing successor of
  $N^{f}_1$, contracts \(a\) and \(b\) have their tokens, as
  transactions $t_{a,c}$ and transaction $t_{b,c}$ are assumed to
  fail.
\end{itemize}
Since the monitoring tree has height $2 < 2+1$, no decision needs to
be made about the first pending transaction, and the history remains
unchanged: $H_2 = H_1 = H_{0}$.

Finally, we apply the last transaction, $t_{c,\langle a, b \rangle}$,
to blockchain run $(H_2,\tau_2)$.
We show the resulting run in
Fig.~\ref{fig:traces}(d-\(\<step>((H_2,\tau_2),t_{c,\langle a, b
  \rangle})\)), and since it is the first time a monitoring window
ends, we also show the intermediate states in
Fig.~\ref{fig:traces}(d-\(\<extend>(\tau_2,t_{a,\langle a, b
  \rangle})\)) and (d-\(\<innerprune>(N^c_1)\)).
We execute transaction $t_{c,\langle a, b \rangle}$ in all four leaves
of $\tau_2$, resulting in one configuration in all cases, as no
contract monitors transaction $t_{c,\langle a, b \rangle}$.
There is only one configuration where contract \(c\) has enough tokens
to perform \(t_{c, \langle a , b \rangle}\), \(N^{\cc}_2\), and thus,
in its committing successor, $N^{\ccc}_3$, tokens are returned to $a$
and $b$.
At the other configurations, \(N^{\fc}_2,N^{\cf}_2\) and
\(N^{\ff}_2\), transaction \(t_{c, \langle a , b \rangle}\) fails, and
thus, all the remaining leaves have only a failing successor.
The intermediate monitoring tree resulting from attaching transaction
\(t_{c,\langle a,b \rangle}\) at \(\tau_{2}\), \(\tau'_2\), is shown
in
Fig.~\ref{fig:traces}(d-\(\<extend>(\tau_2,t_{c,\langle a, b
  \rangle})\)).

The monitoring window for transaction $t_{a,c}$ has ended and a
decision has to be made.
To that end, function \(\<step>\) invokes function \(\<prune>\) to
decide if the first pending transaction, \(t_{a,c}\), fails or
commits.
Function \(\<prune>\) first invokes function \(\<innerprune>\) to
remove all impossible nodes.
Function \(\<innerprune>\) is recursively applied to all subtrees.
Fig.~\ref{fig:traces}(d-\(\<innerprune>(N^c_1)\)) shows the result of
pruning the subtrees with root \(N^c_1\).
The future monitor for transaction at \(N^c_1\), \(t_{b,c}\), commits
if $b$ receives its token back before its monitoring window ends, which
happens in $N^{\ccc}_3$, the only future in $N^c_1$ committing
subtree.
Therefore, all nodes in the failing subtree of $N^c_1$ are impossible
and hence removed by function \(\<innerprune>\).
Similarly, the future monitor for transaction at \(N_0\), \(t_{a,c}\),
commits if $a$ receives its token back before its monitoring window
end, which happens in $N^{\ccc}_1$, the only possible future in $N_0$
committing subtree.
Hence, all nodes in the failing subtree of \(N_0\) are impossible and
removed by function \(\<innerprune>\).
Therefore, once the pruning is complete, node \(N_0\) has no failing
successor and its committing subtree is the pruned version of the
subtree rooted at \(N^c_1\).
Then, function \(\<prune>\) returns the pruned subtree at \(N^c_1\) as
the new monitoring tree, \(\tau_3\), committing transaction
\(t_{a,c}\).
Finally, function \(\<step>\) extends \(H_2\) by making transaction
\(t_{a,c}\) permanent to obtain the new history \(H_3\) (see
Fig.~\ref{fig:traces}(d-\(\<step>((H_2,\tau_2), t_{c,\langle a,b \rangle})\))).
Notice that if node \(N^{\cff}_3\) had been considered when deciding
to commit or fail transaction \(t_{a,c}\), its future monitor would
have incorrectly failed, as \(a\) token is not returned in the
configuration at node \(N^{\cff}_3\), but that node is impossible.

\newcommand{\trace}[2]{\includegraphics[width=#2\textwidth]{figures/trace-#1.pdf}}

\begin{figure}[t!]
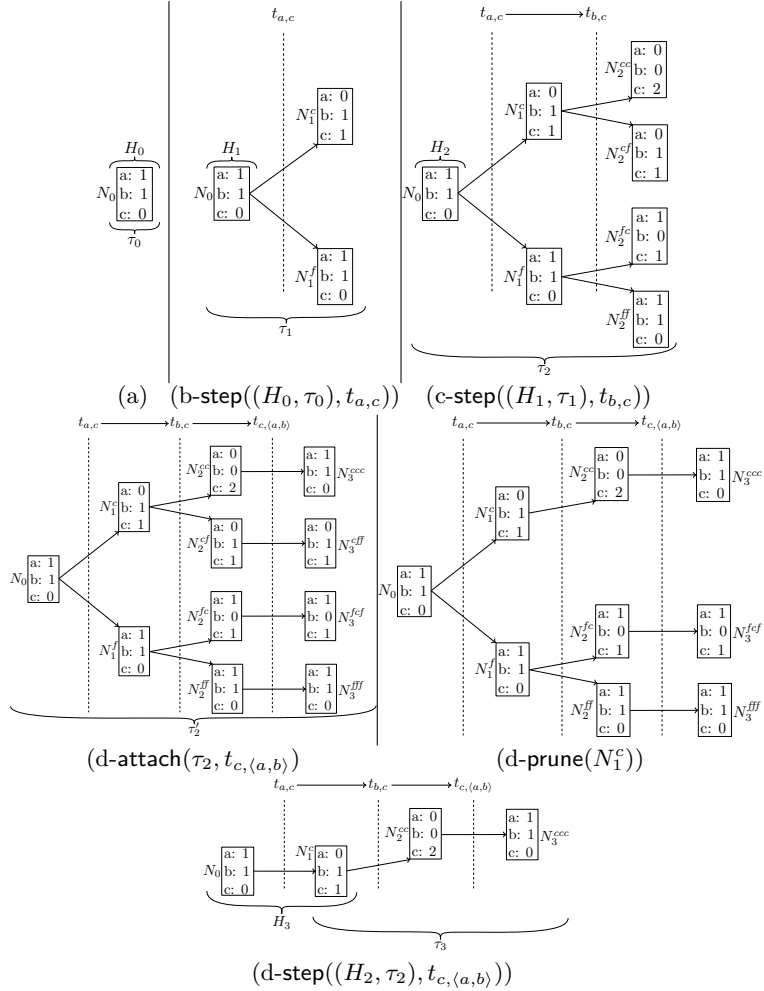

  \footnotesize\centering
  \noindent
  \begin{tabular}{@{}c@{}c@{}c@{}}
    
      \trace{a}{0.07} \vline & \trace{b}{0.2} & \vline \trace{c}{0.3}\\
    (a)  & (b-\(\<step>((H_0,\tau_0),t_{a,c})\))  & (c-\(\<step>((H_1,\tau_1),t_{b,c})\)) \\
    
  \end{tabular}
  \begin{tabular}{@{}c@{}c@{}}
    \trace{d-expand}{0.4} & \vline \trace{d-innerprune}{0.42} \\
    (d-\(\<extend>(\tau_2,t_{c,\langle a,b\rangle})\)  & (d-\(\<innerprune>(N^c_1)\)) \\
  \end{tabular}
  \trace{d-step}{0.4} \\
  (d-\(\<step>((H_2,\tau_2),t_{c,\langle a,b\rangle})\))
  \caption{Blockchain system evolution}
  \label{fig:traces} 
\end{figure}



\section{Properties Proofs}\label{sec:proofs}
In this appendix section, we prove the properties of the model
of computation presented in Section~\ref{sec:properties}.
We assume there is a fixed monitoring window \(k>0\).


The result of pruning a monitor tree is a monitor subtree of the original one
where we keep the root.
\begin{lemma}\label{lmm:prune}
  Let \(\tau\) and \(\nu\) be a monitoring tree such that \(\<innerprune>(\tau) = \nu\).
  Then, \(\nu\) is a subtree of \(\tau\) with the same root as \(\tau\) and all leaves
in \(\nu\) are also leaves in \(\tau\).
\end{lemma}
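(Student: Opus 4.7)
The plan is to proceed by structural induction on the monitoring tree $\tau$, following the recursive structure of $\<innerprune>$. The three properties to establish simultaneously are: (i) $\nu$ is a subtree of $\tau$; (ii) $\nu$ has the same root as $\tau$; (iii) every leaf of $\nu$ is a leaf of $\tau$. Carrying all three together is essential because the inductive hypothesis on the children has to feed each property.

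For the base case, if $\tau$ is a leaf, then $\<innerprune>(\tau)=\tau$ directly, and all three properties hold trivially. For the inductive step I would case-split on $\<successors>(\tau)$, mirroring the $\<switch>$ in the algorithm. In the single-successor case $\tau \xrightarrow{t} \tau'$, the function returns $\tau \xrightarrow{t} \<innerprune>(\tau')$. Applying the induction hypothesis to $\tau'$ gives a subtree $\nu'$ of $\tau'$ with the same root as $\tau'$ and whose leaves are leaves of $\tau'$. Attaching the root of $\tau$ via edge $t$ preserves all three properties for $\tau$.

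In the two-successors case $\<successors>(\tau) = (\tau_c, \tau_f)$, the induction hypothesis applied to $\tau_c$ and $\tau_f$ yields subtrees $\tau'_c$ and $\tau'_f$ sharing the same roots as $\tau_c$ and $\tau_f$, and whose leaves are leaves of $\tau_c$ and $\tau_f$ respectively. The function returns one of $\tau \xrightarrow{t} \tau'_c$, $\tau \xrightarrow{t} \tau'_f$, or $\tau \xrightarrow{t} (\tau'_c, \tau'_f)$. In the last option, both pruned children are reattached, so the result is clearly a subtree of $\tau$ with the same root, and its leaves are leaves of $\tau_c \cup \tau_f$ and hence of $\tau$. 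The first two options correspond to dropping one of the two children entirely; here the root is still the root of $\tau$, and the resulting tree is a subtree of $\tau$ in the sense that every edge and node it contains appears in $\tau$, while the leaves are exactly the leaves of $\tau'_c$ (resp.\ $\tau'_f$), which by the inductive hypothesis are leaves of $\tau_c$ (resp.\ $\tau_f$) and therefore of $\tau$.

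The main obstacle, as usual for these lemmas, is terminological rather than deep: one must verify that ``subtree'' in the statement permits pruning entire branches at a node (not just chopping inner nodes), since the two-successors case may discard $\tau'_f$ or $\tau'_c$ wholesale. Once this interpretation of ``subtree'' is made explicit—consistent with how $\<innerprune>$ is used subsequently by $\<prune>$—the induction goes through cleanly with no calculation required.
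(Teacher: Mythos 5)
Your proof is correct and follows essentially the same route as the paper's: structural induction on $\tau$, with a trivial leaf base case and an inductive step split on the number of successors, applying the inductive hypothesis to the pruned children and checking the three possible return shapes. Your explicit remark that ``subtree'' must be read as permitting the removal of an entire branch is a reasonable clarification but does not change the argument.
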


\begin{proof}
The proof is by structural induction on the tree structure of monitor tree
\(\tau\).
  
\textbf{Base Case:} If \(\tau\) is a leaf, then \(\nu = \tau\) and the
claim trivially holds.
  
\textbf{Inductive Step:}
We split the inductive step into two cases,
based on the number of successors of \(\tau\):
\begin{itemize}
  \item Monitor tree \(\tau\) has only one successor \(\tau'\):
    By inspecting the definition of \(\<innerprune>\), we can see that the
result of \(\<innerprune>(\tau)\) is a tree that has the same root as \(\tau\) and in
its only successor is the pruned version of \(\tau'\), \(\nu'\).
    By the inductive hypothesis, all leaves in \(\nu'\) are also
    leaves in \(\tau'\).
    Since all leaves in \(\tau'\) are also leaves in \(\tau\), it
    follows that all leaves in \(\nu\) are also leaves in \(\tau\).
    Finally, by inductive hypothesis, \(\nu'\) is a subtree of \(\tau'\)
    with the same root as \(\tau'\), making \(\nu\) a subtree of
    \(\tau\).
  \item If \(\tau\) has two successors, then there are three cases
    depending on if function \(\<innerprune>\) removes the failing
    subtree of \(\tau\), the committing subtree of \(\tau\) or neither.
    For all three cases, the proof is analogous to the previous
    case.
\end{itemize}
\end{proof}

Now, we prove each lemma stated in Section~\ref{sec:properties}.

\repeatatlemma{bounded-certainty}
\begin{proof}
  The proof is by induction on the number of steps taken by the step function,
\(l\):
  
  \textbf{Base Case:} If no transaction were added, \(l = 0\), then \(\tau\)
contains only one node, the initial configuration, and the lemma follows.
  
  \textbf{Inductive step:} Let \((H',\nu)\) be a blockchain run
  obtained by applying function \(\<step>\) \(l\) times, and \(t\) a
  transaction such that \(\<step>((H',\nu),t) = (H,\tau)\).
  Function \(\<step>\) starts by invoking function
  \(\<extend>(\nu,t)\), which returns a new monitoring tree, \(\eta\),
  obtained by adding one or two successors to each leaf in \(\nu\).
  Then, all leaves in \(\eta\) are added nodes.
  By inductive hypothesis, all leaves in \(\nu\) are in its last
  level.
  Then, all leaves in \(\eta\) are also its last level.
  
  Furthermore, by inductive hypothesis, the height of \(\nu\) is
  \(\min(k,l)\), hence the height of \(\eta\) is \(\min(k,l) + 1\).
  If \(l < k\) then \(\tau = \eta\) and the lemma follows. 
  Otherwise, the height of \(\eta\) is \(k+1\) and function
  \(\<step>\) invokes function \(\<prune>\), setting
  \(\tau = \<prune>(\eta)\).
  Function \(\<prune>\) returns a subtree rooted in one of the
  successors of \(\<innerprune>(\eta)\), thus the height of \(\tau\)
  is one less than the height of the pruned version of \(\eta\).
  Lemma~\ref{lmm:prune} directly implies that
  \(\<innerprune>(\eta)\) has the same height as \(\eta\).
  As consequence, the height of \(\tau\) is \((k+1)-1 = k\).
\end{proof}

\repeatatlemma{monitoring-tree}
\begin{proof}
Function \(\<step>((H,\tau),t)\) begins by invoking function
\(\<extend>(\tau,t)\) which returns a new monitoring tree, \(\nu\),
obtained by adding one or two successor to each leaf in \(\tau\).
Thus, the height of \(\nu\) is one more than the height of \(\tau\).
By lemma~\ref{l:bounded-certainty}, the height of \(\tau\) is \(k\),
so the height of \(\nu\) is \(k+1\).
Therefore, function \(\<step>\) extends history \(H\) by making
permanent the first transaction in \(\tau\) to obtain \(H'\), and
invokes function \(\<prune>\) to get the new monitoring tree, that is,
\(\tau' = \<prune>(\nu)\).
Function \(\<prune>(\nu)\) first invokes function
\(\<innerprune>(\nu)\), to obtain the pruned version of \(\nu\),
\(\eta\), and then functions \(\<prune>\) returns either the
committing or failing subtree of \(\eta\).
That is, \(\tau'\) is either the committing or failing subtree of the
pruned version of \(\nu\).
Since \(\eta\) is a subtree of \(\nu\) with the same root as \(\nu\)
(see Lemma~\ref{lmm:prune}), the root of \(\tau'\) is one of the
successors of the root of \(\nu\), which, by definition of \(\nu\) is
one of the successors of the root of \(\tau\).
Finally, from Lemma~\ref{lmm:prune}, all paths in \(\tau'\) are also
path in \(\nu\).
Specifically, all paths without leaves in \(\tau'\) are also path
without leaves in \(\nu\).
And, by definition of \(\nu\), all paths without leaves in \(\nu\) are
path in \(\tau\).
Consequently, all paths without leaves in \(\tau'\) are path in \(\tau\).
\end{proof}

\repeatatlemma{innerprune-subtree}
\begin{proof}
  The proof is by induction in \(\tau\).
  
  \textbf{Base Case:} If \(\tau\) is a leaf, then it is not
  impossible and \(\<innerprune>(\tau) = \tau\).

  \textbf{Inductive Step:} There are two cases, depending on the
  number of successors of \(\tau\):
  \begin{itemize}
  \item if \(\tau\) has only one successor, \(\eta\), then
    \(\<innerprune>(\tau)\) returns a tree, \(\tau'\) that has the same root as
    \(\tau\) and in its only successor is the pruned version of
    \(\eta\), \(\eta'\).
    By inductive hypothesis, \(\eta'\) does not have impossible nodes;
    thus, \(\tau'\) does not have impossible nodes either.
    Moreover, only the nodes removed when pruning \(\eta\) are
    removed from \(\tau\), and by inductive hypothesis, those nodes
    are impossible.
    Consequently, only impossible nodes are removed from \(\tau\).
  \item if \(\tau\) has two successors, \(\eta_c\) and \(\eta_f\),
    \(\<innerprune>\) computes the pruned version of each of them,
    \(\eta'_c\) and \(\eta'_f\), respectively.
    By inductive hypothesis, \(\eta'_c\) has no impossible nodes,
    thus, by inspecting the status of the future monitor of the
    transaction at the root of \(\tau\) in all leaves in \(\eta'_c\)
    one can know if there are impossible nodes in \(\tau\).
    If in all futures in \(\eta'_c\) the future monitor is known to
    commit, then all nodes in the failing subtree of \(\tau\) are
    impossible.
    In this case, function \(\<innerprune>\) returns a tree that has
    the same root as \(\tau\) and its only successor is
    \(\eta'_c\).
    Similarly, if in all futures in \(\eta'_c\) the future monitor is
    known to fail, then all nodes in the committing subtree of
    \(\tau\) are impossible, and function \(\<innerprune>\) returns a
    tree that has the same root as \(\tau\) and in its only successor
    is \(\eta'_f\).
    Otherwise, function \(\<innerprune>\) returns a tree that has the
    same root as \(\tau\) and in its successors are \(\eta'_c\) and
    \(\eta'_f\).
    In all cases, the lemma follows by inductive hypothesis.
\end{itemize}
\end{proof}

\repeatatlemma{prune}
\begin{proof}
It follows directly from Lemma~\ref{l:innerprune-subtree}, which
ensures that the pruned version of monitoring trees do not contain
impossible nodes, and the definition of function \(\<prune>\) (see
Fig.~\ref{fig:alg:funcs}).
\end{proof}

\repeatatlemma{size}
\begin{proof}
  By lemma~\ref{l:bounded-certainty} monitoring trees have height at
  most \(k\).
  For this proof we consider, w.l.o.g., monitoring trees with height
  exactly \(k\).
  
  By construction, nodes in monitoring trees have at most two
  successors.
  New successors are only added inside function \(\<extend>\).
  Specifically, \(\<extend>\) adds two successors to a node only if the
  corresponding transaction is being monitored.
  Thus, the \(m\) levels in the monitoring tree that correspond to
  monitored transactions can have at most double the number of nodes
  of the previous level.
  On the other hand, the \(k-m\) levels in the monitoring tree that
  correspond to unmonitored transactions have the same number
  of nodes as the previous level.
  Therefore, a monitoring tree has the maximal number nodes when
  monitored transactions generate branching in all nodes that are
  applied, and they precede all unmonitored transactions.
  In this case, the first \(m\) levels of the monitoring tree
  correspond to a complete binary tree, and the remaining \(k-m\)
  levels have the same number of nodes as the last level of that
  complete binary tree, which is \(2^{m}\).
  Consequently, the maximal size of monitoring trees with \(m\)
  monitored transactions is \(2^{m+1}-1 + 2^{m} \times (k-m)\) which
  is in \(\mathcal{O}(2^m \times k)\).
\end{proof}

\repeatatlemma{legacy-tx}
\begin{proof}
Let \(n\) be a node in
\(\tau\) that is not a leaf, such that its extended blockchain
configuration is \((\Sigma,\Delta,\mathcal{U})\) and
\(\<nextTx>(n) = t\).
Then, since \(\tau\) is a legacy monitoring tree, the function
\(\<applytx>\) can return two possible values:
\begin{itemize}
\item
  Transaction commits,
  \(\<applytx>(n,\<nextTx>(n)) = \commitType(\Sigma_c,\Delta_c,
  \mathcal{U}_c)\), producing a committing step
  \((\Sigma,\mathcal{U}) \leadsto_{t}(\Sigma_c, \mathcal{U}_c)\).
  As \(\<extend>\) is the only function that adds edges to monitoring
  trees, it follows that \(n\) has only one successor, whose extended
  blockchain configuration is \((\Sigma_c,\Delta_c, \mathcal{U}_c)\).
\item
  Transaction fails,
  \(\<applytx>(n,\<nextTx>(n)) = \failType(\Sigma,\Delta,
  \mathcal{U}_f)\), producing a failing step
  \((\Sigma,\mathcal{U}) \leadsto_{t}(\Sigma, \mathcal{U}_f)\).
  Similarly, in this case, node \(n\) also has only one successor, whose
  extended blockchain configuration is
  \((\Sigma,\Delta, \mathcal{U}_f)\).
\end{itemize}  

In consequence, monitoring tree \(\tau\) is a chain, and for each pair
of consecutive nodes \(n_i \xrightarrow{\<tx>} n_{i+1}\) with extended
blockchain configurations \((\Sigma_i,\Delta_i,\mathcal{U}_i)\) and
\((\Sigma_{i+1},\Delta_{i+1},\mathcal{U}_{i+1})\), respectively, the
relation
\((\Sigma_i,\mathcal{U}_i) \leadsto_{tx}
(\Sigma_{i+1},\mathcal{U}_{i+1}) \) holds.
This implies that the execution of transactions in \(\tau\) is
equivalent to executing them in the traditional model of computation.
\end{proof}

\repeatatlemma{legacy}
\begin{proof}
It follows from that every transaction in \(H\) coincides
with rule \(\leadsto\), the root configuration at \(\tau\) being
the last configuration in \(H\) and Lemma~\ref{l:legacy-tx}.
\end{proof}


\section{Atomic Loan - Example Explained}\label{app:Example}

In this section, we describe in detail the blockchain evolution of the example
presented in Section~\ref{sec:example}.
For simplicity, we neglect gas consumption.

\renewcommand{\L}{\texttt{L}\xspace}

Let \NC and \L be two contracts installed in a blockchain with
monitoring window of length \(2\), where \NC runs
\lstinline|NaiveClient| and \L runs \lstinline|Lender|.
Consider \((\Sigma_0,\Delta_0)\) to be a blockchain configuration
where \NC has 100 tokens and \L has 1000 tokens.
We describe the evolution of the system from blockchain run
\((H_0,\tau_0)\) with \( H_0 = \tau_0 = (\Sigma_0,\Delta_0)\), and
executing the following three transactions in order:
\begin{inparaenum}[(1)]
\item \(t_{req}\), \NC requests a loan for 100 to \L;
\item \(t_{inv}\), \NC invests if it has more than 200 tokens;
\item \(t_{ret}\), \NC returns the loan to \L.
\end{inparaenum}

Let \(N_0\) be the root of \(\tau_{0}\).
The execution of transaction \(t_{req}\) in blockchain run
\((H_0,\tau_0)\) starts by executing transaction $t_{req}$ in the only
leaf of \(\tau_0\), $N_0$.
Transaction \(t_{req}\) is originated by an external invocation to
function \lstinline|borrow|.

Within the execution of $t_{req}$, function \lstinline|loan| is
invoked.
Since lender \L has enough tokens, it will send the requested amount
to client \NC, and it will update its \lstinline|pending_returns|
with the lent amount, and also set the entry for transaction $t_{req}$
to undecided in its \lstinline|failmap| map.
As consequence, there is a branching in the monitoring tree.
In node \(N_0\)'s committing successor, \(N^c_1\), the loan is assumed to
take place and the balance of client \NC is increased by 100 tokens
while the balance of lender \L is decreased by 100 tokens.
On the other hand, in the failing successor, \(N^f_1\), transaction
\(t_{req}\) fails and \(N^f_1\) has the same configuration as \(N_0\)
There is no need to prune the new monitoring tree as its height is
$1 < k + 1$, and thus the history remains the same.

In the subsequent transaction, $t_{inv}$, client \NC try to invest its
tokens by invoking function \lstinline|invest|.
Transaction $t_{inv}$ executes in both leaves of the monitoring tree,
$N_1^c$ and $N_1^f$, and since it is not monitored, each of them has
one successor.
In configuration $N_1^c$, client \NC has enough tokens to invest,
resulting in a committing successor, $N_2^{\cc}$, where client NC
generated some profit.
In configuration $N_1^f$, client NC does not have enough tokens to
invest and therefore, it has only a failing successor, $N_2^{\ff}$,
whose blockchain configuration coincides with $N_1^f$.
Again, there is no need to prune the new monitoring tree as its height
is $2 < k + 1$.

Finally, the third transaction, $t_{ret}$, is originated by an invocation
to function \lstinline|payBack| in client \NC, to pay back the
100 tokens loan from transaction $t_{req}$.
Transaction $t_{req}$ is executed in configurations at $N_2^{\cc}$ and
$N_2^{\ff}$, the leaves of the new monitoring tree.
When executing transaction $t_{ret}$ in configuration $N^{\cc}$,
client \NC has enough tokens to send to lender \L, hence function
\lstinline|returnLoan| in lender \L is invoked with the identifier of
transaction \(t_{req}\) as parameter.
Lender \L receives the 100 tokens that it had lent in transaction
$t_{req}$, thus it sets the entry corresponding to $t_{ret}$ in its
\lstinline|failmap| map as \lstinline|COMMIT|.
Transaction \(t_{ret}\) is not monitored and thus \(N_2^{\cc}\) has
only one committing successor, \(N_3^{\ccc}\), where the balance of \NC
is decreased by 100 tokens and the one of \L is increased by 100.
Similarly, when executing transaction $t_{ret}$ in configuration
$N^{\ff}$, client \NC has enough tokens to send to lender \L and
function \lstinline|returnLoan| in lender L is also invoked with the
identifier of transaction \(t_{req}\) as parameter.
In this case, lender \L accepts the 100 tokens, but it does not update
its failing map because in this scenario transaction \(t_{req}\) have
failed, and thus, the value of pending return for transaction
\(t_{req}\) is set to \(-100 \neq 0\) during the execution of
\(t_{ret}\).
Transaction \(t_{ret}\) is not monitored and thus \(N_2^{\ff}\) has
only one committing successor, \(N_3^{\ffc}\).
As transaction \(t_{ret}\) is committed in this case, the balance of
\NC decreases by 100 tokens and the one of \L increase by 100, even
though \L has not lent any tokens to \NC in this scenario.

Let $\nu$ be the monitoring tree up until this point.
Fig.~\ref{fig:lender_client_alwayspays:simp} show the balance of smart
contract NC and L in the monitoring tree $\nu$.
Monitoring tree $\nu$ has height $3 = k + 1$, and thus the monitoring
window for the first transaction has ended, and a decision to commit
or fail it must be taken.
The first pending transaction in $\nu$ is $t_{req}$, where client \NC
requested the loan to lender \L.
The only leaf in $\nu$ that assumes that transaction $t_{ret}$ has
committed is $N_3^{\ccc}$, and in this leaf $t_{ret}$ is committed by the
monitor.
Consequently, $t_{req}$ is committed, and the subtree with root as the
committing successor of $N_0$, $N_1^c$, becomes the new monitoring
tree, discarding the failing subtree of \(N_0\), rooted at
\(N^f_1\), that assumes that transaction $t_{req}$ failed.

Even though in this run the configuration
$N_3^{\ffc}$ is removed, in that configuration, client \NC is paying
back a loan to lender \L that did not happen.


\end{document}